\newcommand{\PTIME}{\mathsf{P}}
\newcommand{\NEXP}{\mathsf{NEXP}}
\newcommand{\PSPACE}{\mathsf{PSPACE}}
\newcommand{\PP}{\mathsf{PP}}
\newcommand{\PPpoly}{\mathsf{PP/poly}}
\newcommand{\IP}{\mathsf{IP}}
\newcommand{\CH}{\mathsf{CH}}
\newcommand{\QIP}{\mathsf{QIP}}
\newcommand{\uQIP}{\mathsf{QIP}_{\mathsf{unent}}}
\newcommand{\QMApoly}{\mathsf{QMA/poly}}
\newcommand{\QAMpoly}{\mathsf{QAM/poly}}
\newcommand{\uQMACMpoly}{\mathsf{QMACM_{unent}/poly}}
\newcommand{\QMAM}{\mathsf{QMAM}}
\newcommand{\QMACM}{\mathsf{QMACM}}
\newcommand{\BPP}{\mathsf{BPP}}
\newcommand{\BQP}{\mathsf{BQP}}
\newcommand{\BQPqpoly}{\mathsf{BQP/qpoly}}
\newcommand{\QCIP}{\mathsf{QCIP}}
\newcommand{\QCMA}{\mathsf{QCMA}}
\newcommand{\QCAM}{\mathsf{QCAM}}
\newcommand{\AM}{\mathsf{AM}}
\newcommand{\QAM}{\mathsf{QAM}}
\newcommand{\QMA}{\mathsf{QMA}}
\newcommand{\uQMACM}{\mathsf{QMACM}_{\mathsf{unent}}}
\newcommand{\Ayes}{A_{\rm yes}}
\newcommand{\Ano}{A_{\rm no}}
\newcommand{\Byes}{B_{\rm yes}}
\newcommand{\Bno}{B_{\rm no}}
\newcommand{\Stat}{\mathsf{Stat}}
\newcommand{\Est}{\mathsf{Est}}
\newcommand{\BP}{\mathsf{BP}}
\newcommand{\PH}{\mathsf{PH}}
\newcommand{\BQ}{\mathsf{BQ}}
\newcommand{\RE}{\mathsf{RE}}
\newcommand{\EXP}{\mathsf{EXP}}
\newcommand{\MIP}{\mathsf{MIP}}
\newcommand{\NP}{\mathsf{NP}}
\newcommand{\PostBQP}{\mathsf{PostBQP}}
\newcommand{\States}{\mathcal{D}}
\title{Unentanglement and Post-Measurement Branching\\in Quantum Interactive Proofs}
\author{Sabee Grewal\thanks{UT Austin. \texttt{\{sabee,kretsch\}@cs.utexas.edu}} 
\and William Kretschmer\footnotemark[1]}
\date{}
\begin{document}

\maketitle

\begin{abstract}
We investigate two resources whose effects on quantum interactive proofs remain poorly understood: the promise of \emph{unentanglement}, and the verifier’s ability to condition on an intermediate measurement, which we call \emph{post-measurement branching}.
We first show that unentanglement can dramatically increase computational power: three-round unentangled quantum interactive proofs equal $\NEXP$, even if only the first message is quantum.
By contrast, we prove that if the verifier uses no post-measurement branching, then the same type of unentangled proof system has at most the power of $\QAM$.
Finally, we investigate post-measurement branching in two-round quantum-classical proof systems.
Unlike the equivalence between public-coin and private-coin classical interactive proofs, we give evidence of a separation in the quantum setting that arises from post-measurement branching.

\end{abstract}

\hypersetup{linktocpage}

\section{Introduction}

Quantum proof systems are one framework for studying quantum mechanical effects on computational complexity. They allow us to ask how uniquely quantum resources---such as superposition, entanglement, and measurement---affect our ability to efficiently verify that certain statements are true. 
A broad goal in this area is understanding which resources increase computational power, which restrict it, and which turn out to be irrelevant.

In this work, we focus on two such resources: the promise of \emph{unentanglement}, and the verifier’s ability to perform \emph{post-measurement branching}. 
At a high level, we show that each of these resources can fundamentally alter the computational power of interactive proof systems.

\subsection{Unentanglement}
Whether entanglement increases or decreases the computational power of quantum proof systems is subtle. 
A striking example is the separation between $\MIP^*$ and $\MIP$, the classes of problems decidable by multiprover interactive proof systems with and without entanglement between provers, respectively.
Whereas $\MIP = \NEXP$~\cite{bfl}, it took considerable additional effort to show that the entangled-prover variant $\MIP^*$ contains $\NEXP$~\cite{ito-vidick2012mipnexp}, because entangled provers could potentially cheat in ways that unentangled provers cannot.
On the contrary, we now know that entanglement adds vastly more power to multiprover interactive protocols: $\MIP^*$ was recently shown to equal $\RE$~\cite{ji2022mipre}, and thus can solve undecidable problems.

Conversely, it is widely believed that in certain proof systems, \textit{unentanglement} affords additional computational power.
This is perhaps best captured by the $\QMA$ vs.\ $\QMA(2)$ question~\cite{kobayashi2001quantumcertificateverificationsingle}, which asks whether two unentangled quantum proofs are more powerful than one.
Despite considerable effort (e.g., \cite{aaronson2008unentanglement,chen2010shortmultiproverquantumproofs,bliertapp,legall2012qma,chiesa2013qma,chailloux2012complexity,HM13,}), nothing beyond the trivial containments $\QMA \subseteq \QMA(2) \subseteq \NEXP$ has been established since $\QMA(2)$'s introduction over two decades ago \cite{kobayashi2001quantumcertificateverificationsingle}.
Moreover, unlike for questions such as $\PTIME$ vs. $\BPP$ or even $\BQP$ vs. $\PH$, there is little consensus in the community about what the answer should be. 

Underlying the $\QMA$ vs.\ $\QMA(2)$ problem is a more basic question: can we exhibit examples where the promise of unentanglement grants quantum proof systems exponentially more computational power?
More recent work has attempted to study this question by defining variants of $\QMA$ and showing them equal to $\NEXP$~\cite{jeronimo2023thepower,bassirian_et_al:LIPIcs.ITCS.2024.9,jeronimo_et_al:LIPIcs.CCC.2024.26,aaronson2024pdqmadqma,Bassirian2025superposition,bassirian2024quantummerlinarthurinternallyseparable}. 
For example, Jeronimo and Wu~\cite{jeronimo2023thepower} introduced the class $\QMA^+(2)$, where the verifier receives two unentangled messages that are additionally promised to consist of real nonnegative amplitudes.
Their main theorem $\QMA^+(2) = \NEXP$ showed that the combination of unentanglement and nonnegative amplitudes can give exponentially more power.
However, this evidence for the power of unentanglement was called into question shortly thereafter, when Bassirian, Fefferman, and Marwaha~\cite{bassirian_et_al:LIPIcs.ITCS.2024.9} showed that nonnegative amplitudes alone, without any promise of unentanglement, give the same power: $\QMA^+ = \NEXP$.

In fact, the only one of these works that showed how unentanglement specifically grants computational power is by Bassirian, Fefferman, Leigh, Marwaha, and Wu~\cite{bassirian2024quantummerlinarthurinternallyseparable}.
There, they define ``internally separable'' variants $\QMA_\mathsf{IS}$ and $\QMA_\mathsf{IS}(2)$ of $\QMA$ and $\QMA(2)$, respectively, in which the quantum witnesses satisfy a certain multipartite entanglement condition.
They then show that $\QMA_\mathsf{IS} \subseteq \EXP$ while $\QMA_\mathsf{IS}(2) = \NEXP$.
(Note that their $\QMA_{\mathsf{IS}}(2) = \NEXP$ is highly sensitive to the completeness and soundness parameters, which cannot (apparently) be generically amplified.)
On the whole, these works identify nontraditional sources of computational power (e.g., non-negative amplitudes and non-collapsing measurements), which have little to do with unentanglement.

In this work, we take a different approach to studying the power of unentanglement.
Rather than considering new computational resources on top of unentanglement, we take a well-studied quantum proof system and investigate how it changes with an additional promise of unentanglement.
We start with $\QIP[3]$, the class of problems decidable by a three-round quantum interactive proof system.
Here the $[3]$ denotes the number of rounds, unlike the $(2)$ in $\QMA(2)$, which refers to the number of unentangled proofs; hence the careful difference in notation.
It is a celebrated result that $\QIP[3] = \QIP = \IP = \PSPACE$~\cite{jain2011qip}, showing that three-round quantum interactive proofs characterize polynomial space.
Because quantum and classical interactive proof systems are so well understood, $\QIP[3]$ serves as a natural baseline for exploring the effects of unentanglement.

We introduce an unentangled variant of $\QIP[3]$, denoted $\uQIP[3]$, that differs from $\QIP[3]$ in exactly one respect: the prover is restricted to apply channels that generate no entanglement between their private workspace qubits and any messages sent to the verifier.
We formalize $\uQIP[3]$ using so-called \textit{entanglement-breaking} channels~\cite{HSR03-entanglement}, which are precisely the set of channels whose action on one half of any bipartite state yields a separable state across the bipartition.
These channels are also known as \textit{measure-and-prepare} channels, because they can be equivalently described as first measuring the input state and then preparing a new quantum state conditioned on the classical measurement outcome. 
Our definition of $\uQIP[3]$ trivially captures $\QMA(2)$ as a special case: the prover sends the first witness in the first round, the verifier does nothing in the second round, and the prover sends the second witness (unentangled with the first) in the third and final round.

Our first result establishes that the promise of unentanglement in $\uQIP[3]$ yields dramatically greater power than $\QIP[3] = \PSPACE$:

\begin{theorem}\label{thm:intro-uqip-nexp}
    $\uQIP[3] = \NEXP$.
\end{theorem}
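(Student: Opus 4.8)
The plan is to establish both containments $\uQIP[3] \subseteq \NEXP$ and $\NEXP \subseteq \uQIP[3]$. The upper bound should be relatively straightforward: a $\uQIP[3]$ protocol involves three messages, with the prover constrained to entanglement-breaking channels. The entanglement-breaking constraint means the prover's strategy can be described by a measurement on the first message followed by preparation of a classical-outcome-dependent state for the third message. A $\NEXP$ machine can guess an exponentially-large description of the verifier's acceptance probability as a function of the (exponentially many) possible intermediate classical outcomes, together with the optimal prepared states, and verify consistency by a $\PSPACE$ (hence $\NEXP$) computation. More carefully, since $\QIP[3] = \PSPACE$ without the unentanglement promise, and the promise only restricts the prover, one expects the honest-prover analysis to still give an $\NEXP$-checkable object; I would formalize this by writing the optimal value as a maximization over measure-and-prepare strategies and observing that nondeterministically guessing the classical "branching data" reduces the residual optimization to a polynomial-space (or semidefinite) feasibility check.

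**The main work is the lower bound** $\NEXP \subseteq \uQIP[3]$. Here I would follow the template of the recent $\QMA^+(2) = \NEXP$ and $\QMA_{\mathsf{IS}}(2) = \NEXP$ results, which reduce from a $\NEXP$-complete constraint satisfaction problem (e.g., succinct $3$-coloring or a suitable gapped local Hamiltonian / PCP-type problem) and use unentangled proofs encoding an exponentially-long assignment in $\mathrm{poly}(n)$ qubits via amplitude encoding. The key new ingredient available in $\uQIP[3]$ compared to $\QMA(2)$ is the extra round of interaction: the verifier can send a challenge (a classical random string, say a random clause index or a random pair of indices to test) \emph{after} receiving the first unentangled message, and the prover must respond with a second message that is guaranteed unentangled with any workspace. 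My plan is to exploit this to implement a "tester" for a proper amplitude encoding — the first message purports to be $\sum_x \sqrt{p_x}\,\ket{x}$ for a satisfying assignment distribution, the verifier picks a random constraint, and uses the second round to have the prover "open" the relevant coordinates, with the entanglement-breaking promise preventing the prover from keeping a coherent correlated copy that would let it cheat across different challenges. The soundness analysis would combine a consistency/symmetry test (swap test or the de Finetti-type / SWAP-based product tests used in the $\QMA(2)$ literature) with the PCP gap.

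**The hard part will be the soundness analysis**, specifically ruling out cheating strategies that are subtle precisely because the prover is \emph{less} constrained than in $\QMA(2)$ in one respect (it sees the verifier's challenge before sending the second message) while \emph{more} constrained in another (entanglement-breaking). A cheating prover could try to use the intermediate classical information it extracts (the post-measurement branching happens on the prover side implicitly, via the measure-and-prepare structure) to tailor an inconsistent response; I would need to argue that because the second message is itself unentangled with the prover's memory of the first, any such adaptive cheating collapses to a convex combination of non-adaptive strategies that the PCP soundness already defeats. I expect to need a careful information-theoretic argument — likely showing that the joint verifier-side state after the protocol, for any entanglement-breaking prover, is a separable state of a form amenable to a product-test or LOCC-type bound — and then amplifying the resulting inverse-polynomial gap (via parallel repetition of the unentangled proofs, which is the usual delicate step) to push completeness and soundness apart. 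Combining the two directions then yields $\uQIP[3] = \NEXP$.
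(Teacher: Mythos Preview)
Your upper bound sketch is essentially the paper's argument: after observing that an optimal unentangled prover can be put into a canonical measure-and-prepare form with no private workspace (the paper's \Cref{lem:prover_canonical}), an $\NEXP$ machine guesses an exponential-size description of the two entanglement-breaking channels and checks the acceptance probability directly. So that direction is fine.

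The lower bound plan, however, has a genuine gap. You propose a public-coin round~2 (``the verifier picks a random constraint'') followed by a prover ``opening,'' with soundness handled by swap tests or de~Finetti-style product tests imported from the $\QMA(2)$ literature. Two problems. First, those tests require two unentangled quantum proofs available \emph{simultaneously}; in $\uQIP[3]$ the round-3 message is sent \emph{after} the challenge, so even if it is quantum and separable from the round-1 state, the prover can tailor it to the challenge, which is strictly more power than $\QMA(2)$ and breaks the standard product-test soundness arguments. Second, and more decisively, the paper itself proves that if the verifier's round-2 message consists of public coins (no post-measurement branching on the round-1 state) and the round-3 message is classical, the resulting class $\uQMACM$ is contained in $\QAM$ (\Cref{thm:qmacm_in_qam_intro}). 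So a protocol of the shape you describe cannot reach $\NEXP$ unless $\QAM = \NEXP$.

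The paper's lower bound is both simpler and crucially different in structure: it invokes Raz's quantum PCP for $\NEXP$ (\Cref{lem:raz_nexp_pcp}), in which a quantum verifier receives a $\poly(n)$-qubit witness, \emph{measures it} to obtain a single classical query location, and then reads one block of an exponential-length classical proof. This translates directly into a $\uQIP[3]$ protocol where the verifier's round-2 message is the measurement outcome on the round-1 state (post-measurement branching on the \emph{verifier} side, not the prover side as you suggest), and the prover's round-3 message is the classical answer $f(y)$. Soundness then follows immediately from Raz's PCP soundness once the prover is in canonical form, with no swap tests or amplification needed.
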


Note that, unlike prior comparable work~\cite{jeronimo2023thepower,bassirian_et_al:LIPIcs.ITCS.2024.9,bassirian2024quantummerlinarthurinternallyseparable}, our result is insensitive to the precise completeness-soundness gap.
In particular, \cref{thm:intro-uqip-nexp} holds for any constant gap less than $1$.

For the upper bound, we argue that an $\NEXP$ machine can nondeterministically guess an exponentially-large description of the prover's strategy and then verify whether it causes the verifier to accept with high probability or not.
To prove the complementary lower bound, we make use of a certain quantum PCP for $\NEXP$ introduced by Raz some 20 years ago~\cite{Raz05-qipcp}.
Raz showed that any $\NEXP$ language admits a polynomial-time quantum verifier that receives two inputs: a polynomial-length quantum witness, and an exponentially-large classical proof (readable by query access).
The verifier measures the quantum witness and, based on the measurement outcome, queries a \textit{single} polynomial-size block of the proof.
We argue that this PCP can be simulated by a $\uQIP[3]$ protocol: the prover sends the quantum witness in the first round, the verifier measures it and sends the resulting classical query to the prover in the second round, and the prover responds with the answer to the classical query in the final round.

One can view \Cref{thm:intro-uqip-nexp} as popularizing and modernizing Raz's quantum PCP result in the context of $\QMA(2)$ and the power of unentanglement, as the proof is not particularly difficult with Raz's result in hand.
Indeed, Raz's discussion briefly mentions something close to \Cref{thm:intro-uqip-nexp}: that an interactive prover with quantum power in the first round and classical power thereafter can convince a verifier of the solution to an $\NEXP$ problem~\cite[Section 1.5]{Raz05-qipcp}.
Nevertheless, we found that \Cref{thm:intro-uqip-nexp} surprised every expert in the field with whom we consulted.

It is striking that the containment of $\NEXP$ in $\uQIP[3]$ makes only partial use of the proof system's quantum capabilities, as the second and third messages are purely classical.
Naturally, one might wonder why we need the promise of unentanglement at all: if the verifier knows that the final message is classical, then doesn't that already guarantee zero entanglement between the prover's first and last messages?
The key observation is that the source of power is not unentanglement between the messages, but rather from the unentanglement between the \textit{prover's workspace qubits} and the messages.

The following simple example illuminates the situation: consider a $\QIP[3]$ protocol in which the verifier challenges the prover to a version of the CHSH game~\cite{CHTW04-chsh}, with the verifier playing both the role of the referee and one of the players.
In the first round, the prover sends the verifier a single qubit.
The verifier then uniformly samples $x \sim \{0,1\}$ and sends it to the prover, who responds with a single classical bit $a$.
Next, the verifier uniformly samples $y \sim \{0,1\}$.
If $y = 0$, the verifier measures the qubit sent by the prover in the $\{\ket 0, \ket 1\}$ basis; else they measure in the $\{\ket{+},\ket{-}\}$ basis.
Calling the measurement outcome $b$, the verifier accepts if and only if $xy = a \oplus b$.
An entangled prover can succeed with probability $\cos^2(\pi/8) \approx 0.85$ by sending one half of a Bell pair in the first round, and, in the final round, measuring the other half according to the optimal CHSH strategy. By contrast, an \emph{unentangled} prover can make the verifier accept with probability at most $0.75$.
Hence, even with only a single quantum message in the first round, the promise of unentanglement places a nontrivial restriction on the set of valid $\QIP[3]$ prover strategies.

\subsection{Post-Measurement Branching and Unentanglement}

One key difference between $\uQIP[3]$ and $\QMA(2)$ is adaptivity: a $\uQIP[3]$ verifier can condition their round-2 challenge to the prover on the result of a measurement, possibly applied to the round-1 message.
By contrast, a $\QMA(2)$ verifier receives a pair of unentangled witnesses simultaneously, without the ability for either witness to depend on a chosen challenge.
In the interest of isolating the source of $\uQIP[3]$'s exponential power, it is natural to ask whether the quantum-classical-classical unentangled proof system for $\NEXP$ necessitated this sort of adaptivity.
For example, if instead the verifier simply sent the prover random coin tosses in round $2$, could they still verify solutions to $\NEXP$ problems?

In the context of quantum proof systems, we refer to this type of adaptivity as \emph{post-measurement branching}, which means the ability to condition on a partial measurement of a state while retaining the residual post-measurement state.
Our proposal to study the same quantum proof system with and without post-measurement branching mirrors the difference between $\AM[k]$ and $\IP[k]$: in $\AM[k]$ the verifier's messages to the prover consist of public coin tosses, whereas in $\IP[k]$ the messages can be arbitrary polynomial-time randomized computations on the prior transcript of the protocol.
Classically, we know that adaptivity cannot help much: $\IP[k] \subseteq \AM[k+2]$~\cite{goldwasser1986publicvsprivate}, and for constant $k$, $\AM[k] \subseteq \AM[2]$~\cite{babaicollapse,BM88-am}.
But should we expect the analogous equivalence to hold for quantum protocols, unentangled or otherwise?

Concretely, consider the subclass of $\uQIP[3]$ in which the round-2 message consists of public coin tosses and the round-3 message is classical.
We call this subclass $\uQMACM$ because it behaves like $\QMAM$~\cite{marriott2005quantum}, except that the prover's private and message registers are always unentangled and the last message is classical.
Then does $\uQMACM = \uQIP[3] = \NEXP$?
Our second result gives strong evidence that the answer is no:

\begin{theorem}
\label{thm:qmacm_in_qam_intro}
    $\uQMACM \subseteq \QAM$.
\end{theorem}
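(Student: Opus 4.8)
The claim is that $\uQMACM \subseteq \QAM$. Recall the structure of a $\uQMACM$ protocol: the prover sends a quantum state $\rho$ on message register $\mathsf{M}$ in round 1, the verifier replies with a uniformly random string $r$ (public coins) in round 2, and the prover responds with a classical string $a$ in round 3; crucially, because the prover's strategy must be entanglement-breaking, the prover's round-3 message can be modeled as follows — the round-1 message $\rho$ is *separable from the prover's workspace*, so after sending $\rho$, the prover's residual state is uncorrelated with $\mathsf{M}$, and hence the round-3 response $a$ can only depend on $r$ (and the prover's own randomness / the classical description used to prepare $\rho$), but *not* on $\rho$ itself. In other words, the honest prover's behavior is captured by: a distribution over $(\rho, f)$, where $\rho$ is the message state and $f$ is a function mapping the challenge $r$ to a (distribution over) classical response $a$. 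Soundness must hold against all such strategies.

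**The plan.** The idea is to collapse the three-round protocol into a two-round $\QAM$ protocol (Arthur sends public coins, Merlin replies with a quantum state, Arthur verifies). First I would argue that, by the entanglement-breaking constraint, the prover may as well commit in round 1 to a *classical* description of the entire round-3 response function $f \colon r \mapsto a$, rather than sending it adaptively — since the round-3 message is classical and carries no quantum information back, and the round-1 state $\rho$ is unentangled with the prover's side, there is no loss in having the prover send $\rho$ together with a classical encoding of $f$ up front. (One must be slightly careful: $f$ could be exponentially large as a truth table, but since $r$ is polynomially long, $f$ has a description of size $2^{\mathrm{poly}}$, so this cannot be sent directly; instead the prover sends $\rho$ in round 1, Arthur then picks $r$, and Merlin must answer — but in $\QAM$ the order is coins-then-message, so we want: Arthur sends $r$ first, then Merlin sends $\rho$ together with $a = f(r)$.) This reordering is the key move and is where the hypothesis that round 2 is *public coins* (not a private computation / not post-measurement branching) is essential: because $r$ is just uniform randomness independent of everything, the prover gains nothing by seeing $\rho$ before choosing $a$, so we can freely swap the order to "Arthur announces $r$, then the prover sends $(\rho, a)$." The resulting protocol is exactly a $\QAM$ protocol: one round of public coins from Arthur, one quantum message $(\rho, a)$ from Merlin, then a polynomial-time quantum verification.

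**Completeness and soundness.** Completeness is immediate: the honest $\uQMACM$ prover's strategy, with $\rho$ and the function $f$ chosen as in the honest $\uQMACM$ protocol, translates directly to a $\QAM$ prover that achieves the same acceptance probability. For soundness, I would show that any $\QAM$ cheating prover (who sees $r$, then sends an arbitrary joint state on $\mathsf{M}$ together with a classical claimed answer $a$) can be simulated by a $\uQMACM$ cheating prover achieving at least as high acceptance: given the $\QAM$ prover's (possibly entangled-with-nothing, since it's the final message) behavior, for each $r$ it prepares some $\rho_r$ and answer $a_r$; a $\uQMACM$ prover can first guess — no, cannot guess $r$; instead: note that a $\uQMACM$ prover *does* get to see $r$ before sending $a$, so it can send exactly $a_r$; the only mismatch is that the $\uQMACM$ prover must send $\rho$ in round 1 *before* seeing $r$, whereas the $\QAM$ prover sends $\rho_r$ after. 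Here the entanglement-breaking restriction actually *helps soundness*: because the $\uQMACM$ prover's message must be separable from its workspace, one shows (via a convexity/averaging argument over the prover's classical randomness, i.e. decomposing any entanglement-breaking strategy into a mixture of "prepare a fixed pure $\rho$, then respond according to a fixed function" branches) that the best $\uQMACM$ prover is at most as good as the best $\QAM$ prover where Merlin picks $\rho$ after seeing $r$. The soundness gap is preserved up to the usual $\QAM$ amplification. Finally one must check that $\QAM$ is closed under the resulting (constant, or at worst inverse-polynomial) completeness–soundness gap — standard.

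**Main obstacle.** The delicate point I expect to wrestle with is making the "the round-3 response depends only on $r$, not on $\rho$" reduction fully rigorous in the presence of shared classical randomness between the prover's two rounds, and correctly handling the direction of the reduction for *soundness*: we need the $\uQMACM$ verifier to reject all entanglement-breaking prover strategies whenever the $\QAM$ verifier would, and this requires arguing that sending $\rho$ obliviously to $r$ is no better than sending it after seeing $r$ — which is true precisely because sending it *after* only gives the prover more power, so the containment goes $\uQMACM \subseteq \QAM$ and not the reverse. Formalizing the entanglement-breaking structure as a convex combination of "measure-and-prepare" branches, and tracking that this structure is exactly what forbids the round-3 message from depending on the quantum state $\rho$, is the technical heart of the argument.
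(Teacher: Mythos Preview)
Your proposal has a genuine gap in the soundness argument, and the confusion is visible in the text itself: you first correctly state that for soundness you must show every $\QAM$ cheater can be simulated by a $\uQMACM$ cheater with at least the same acceptance probability, and then two sentences later you argue the \emph{opposite} inequality (``the best $\uQMACM$ prover is at most as good as the best $\QAM$ prover''). The slogan ``sending $\rho$ after seeing $r$ only gives the prover more power, so $\uQMACM \subseteq \QAM$'' is a non sequitur: a more powerful prover helps completeness but \emph{hurts} soundness, so the containment does not follow from that observation.

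Concretely, in your proposed $\QAM$ protocol Arthur sends a single challenge $r$ and Merlin replies with $(\rho_r,a_r)$; the optimal acceptance probability is $\E_r \max_{\rho,z}\tr(\rho M_{r,z})$, whereas in the original $\uQMACM$ protocol it is $\max_{\rho}\E_r \max_z \tr(\rho M_{r,z})$. The former can be strictly larger. For a one-qubit example, let Arthur's bit $r$ pick the standard or Hadamard basis, let Merlin's bit $z$ be a guess for the outcome, and accept iff the measurement of $\rho$ in basis $r$ yields $z$. In $\uQMACM$ the best fixed $\rho$ achieves $\cos^2(\pi/8)\approx 0.854$, while in your $\QAM$ conversion Merlin sees $r$ first and achieves $1$. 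So a ``no'' instance of a $\uQMACM$ problem with soundness $0.9$ could become a $\QAM$ instance with acceptance probability $1$, destroying soundness.

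The paper's proof handles exactly this obstacle. Instead of one challenge, Arthur sends polynomially many independent challenges $y_1,\dots,y_r$; Merlin must return a \emph{single} state $\ket{\psi}$ together with classical answers $z_1,\dots,z_r$, and Arthur checks a random one. The point is that forcing one $\ket{\psi}$ to work for many random challenges approximately recovers the $\max_\psi \E_y$ order of quantifiers. Proving this requires a nontrivial union bound over all $m$-qubit states, which the paper obtains from Aaronson's one-way communication compression (their Lemma~\ref{lem:stats-est} and the subsampling Lemma~\ref{lem:subsampling}): every $m$-qubit state can be summarized by a $\poly(m)$-bit string from which all the relevant expectations are recoverable to accuracy $\varepsilon$, so one union-bounds over $2^{\poly(m)}$ summaries rather than over the continuum of states. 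This is the missing idea; without it, the naive reordering you propose does not yield a sound $\QAM$ protocol.
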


Here, $\QAM$ is the set of problems verifiable by an interaction in which the verifier (Arthur) sends public coin tosses and the prover (Merlin) responds with a quantum message~\cite{marriott2005quantum}.
In contrast to $\uQIP[3] = \NEXP$, where restricting to unentangled provers significantly \textit{increased} computational power compared to $\QIP[3] = \PSPACE$, here the unentangled variant $\uQMACM$ is quite plausibly \emph{weaker} than its entangled variant $\QMACM$.
Indeed, whereas $\uQMACM \subseteq \QAM = \BP \cdot \QMA \subseteq \BPP^\PP$ (where $\BP \cdot \QMA$ denotes problems that have a randomized many-one reduction to $\QMA$), we know of no better upper bound on the corresponding entangled proof system than $\QMACM \subseteq \QMAM = \QIP[3] = \PSPACE$~\cite{jain2011qip}.
$\uQMACM$ is possibly equal to $\QMA$, and in fact the classes coincide with polynomial-size advice: $\QMApoly = \QAMpoly = \uQMACMpoly$, because $\QAM = \BP \cdot \QMA$ and the $\BP \cdot$ operator can be derandomized with advice (cf.~\cite{aaronson2006qma,aaronson2023certified}).
Thus, \Cref{thm:qmacm_in_qam_intro} illustrates both the necessity of post-measurement branching for making certain proof systems equal to $\NEXP$, and the surprising fact that unentanglement may hinder the power of an interactive proof system.

The proof of \Cref{thm:qmacm_in_qam_intro} involves simulating the $\uQMACM$ proof system in two rounds by combining Merlin's first and last messages into one.
In the $\QAM$ protocol, Arthur first sends Merlin polynomially many independent challenges that he could have sent in round $2$ of the $\uQMACM$ protocol.
Then Arthur asks for both Merlin's quantum proof that he would have sent in round $1$, and Merlin's classical answers that he would have given in round 3 in response to each of the challenges.
We argue that the soundness of the protocol is approximately preserved if Arthur runs his original $\uQMACM$ check on a random one of the challenges.
This strategy is somewhat analogous to the $\AM[k] \subseteq \AM[2]$ collapse theorem~\cite{BM88-am}, but requires heavier tools to handle the quantum part of the message.
For example, a crucial ingredient in our proof comes from one-way communication complexity: any $n$-qubit quantum state $\rho$ can be ``compressed'' into a $\poly(n)$-bit message, from which the expectation of $\exp(n)$ different measurements on $\rho$ may be later estimated~\cite{aaronson2005limitations}.
We do not directly use this compression scheme in the $\QAM$ protocol, but it indirectly allows us to apply a union bound over the set of $n$-qubit states as if there were only $2^{\poly(n)}$ of them, instead of $2^{\exp(n)}$.

\subsection{Post-Measurement Branching with Classical Messages}
Finally, we turn to the simplest setting in which the effect of post-measurement branching can be studied: two-round interactive proofs with classical messages and a quantum verifier. 
Specifically, we study the complexity classes $\QCAM$ and $\QCIP[2]$. In $\QCAM$, the verifier's sole message consists of random coin tosses. In $\QCIP[2]$, by contrast, the verifier may send an arbitrary classical message generated through a partial measurement of a quantum state; both the classical outcome and the corresponding post-measurement state can then be used later in the verification procedure. 

Recall that $\AM$ and $\IP[2]$ coincide, and more generally, constant-round public-coin ($\AM$) and private-coin ($\IP$) protocols have the same computational power~\cite{babaicollapse,goldwasser1986publicvsprivate,BM88-am}. 
In contrast, we show that the quantum setting exhibits an apparent separation: $\QCIP[2]$ potentially has greater power than $\QCAM$. 
Our findings are summarized in the following theorem.

\begin{theorem}
\label{thm:qcam_containments_intro}
 $\QCAM = \BP \cdot \QCMA \subseteq \BQ \cdot \QCMA \subseteq \QCIP[2] \subseteq \BQP^{\NP{^\PP}}$.
\end{theorem}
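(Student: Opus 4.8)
The plan is to prove the four relations from left to right, with only the last inclusion requiring real work. The equality $\QCAM = \BP\cdot\QCMA$ is the quantum analogue of Babai's $\AM = \BP\cdot\NP$ and I would prove it the same way. For $\BP\cdot\QCMA \subseteq \QCAM$, given a $\QCMA$ promise problem $L'$ and a randomized many-one reduction $x \mapsto (x,r)$, Arthur sends $r$ in round one and the parties then run the one-message $\QCMA$ protocol for ``$(x,r)\in L'$'' in round two; completeness and soundness follow by composing the two guarantees after standard amplification. For the reverse inclusion, I would first amplify the $\QCAM$ protocol so that its completeness--soundness gap is, say, $[1-2^{-n},2^{-n}]$; then all but an exponentially small fraction of strings $r$ are ``unambiguous,'' meaning Merlin's best response makes the verifier accept with probability either $\ge 2/3$ or $\le 1/3$. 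Taking $L'$ to be the $\QCMA$ problem ``given $(x,r)$, does Merlin have a response accepted with probability $\ge 2/3$'' (promised to be $\ge 2/3$ or $\le 1/3$) and the reduction $x\mapsto(x,r)$ with uniform $r$ then works, since it lands in the promise of $L'$ with overwhelming probability. Next, $\BP\cdot\QCMA\subseteq\BQ\cdot\QCMA$ is immediate, since $\BQ\cdot$ generalizes $\BP\cdot$ by letting the string fed to the inner $\QCMA$ problem be produced by a polynomial-time quantum sampler, and fair coins are a trivial special case. Finally, $\BQ\cdot\QCMA\subseteq\QCIP[2]$: given a quantum sampler $Q$ and $L'\in\QCMA$ witnessing $L\in\BQ\cdot\QCMA$, the $\QCIP[2]$ verifier runs $Q(x)$, measures out a classical $\alpha$, sends it, receives a classical $\QCMA$-witness $\beta$, and accepts iff the $\QCMA$ verifier for $L'$ accepts $(x,\alpha,\beta)$; both messages are classical and the verifier is polynomial-time quantum, and the parameters work after amplifying so that ``$(x,\alpha)$ lands in the promise gap of $L'$'' is unlikely.

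For the main inclusion $\QCIP[2]\subseteq\BQP^{\NP^\PP}$, fix a $\QCIP[2]$ protocol on input $x$. Deferring Arthur's internal measurements, I may assume his first round is a unitary $U_1$ on a message register $M$ and a private register $S$, after which he measures $M$, obtaining $\alpha$ with probability $p_\alpha$, keeps the (possibly mixed) residual state on $S$, sends $\alpha$, receives a classical $\beta\in\{0,1\}^{\poly}$, applies a second unitary $U_2$ (controlled on $\alpha$) to $S$ together with $\beta$, and measures an output qubit; write $a(\alpha,\beta)$ for the acceptance probability. Since Merlin sees $\alpha$, the protocol's value is $p^\star=\sum_\alpha p_\alpha\max_\beta a(\alpha,\beta)=\mathbb{E}_{\alpha\sim p}[\max_\beta a(\alpha,\beta)]$, and the task is to separate $p^\star\ge c$ from $p^\star\le s$. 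The key observation is that Arthur's residual state never needs to be represented explicitly: both $p_\alpha$ and $p_\alpha\,a(\alpha,\beta)$ are acceptance probabilities of fixed polynomial-size quantum circuits built from $U_1,U_2,\alpha,\beta$ --- ``run $U_1$ and accept iff the measurement of $M$ yields $\alpha$,'' and ``additionally run $U_2$ with input $\beta$ and also accept'' --- and such probabilities have the form $G/2^k$ for a $\mathsf{GapP}$ function $G$ and a common $k$. Hence ``$a(\alpha,\beta)\ge\tau$'' amounts to comparing a fixed integer combination of two $\mathsf{GapP}$ functions against $0$, a $\PP$ predicate in $(\alpha,\beta,\tau)$, so ``$\exists\beta:\ a(\alpha,\beta)\ge\tau$'' lies in $\NP^\PP$. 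The simulation is then: (i) run $U_1$ and measure $M$, sampling $\alpha\sim p$ --- the sole use of quantum power; (ii) with polynomially many $\NP^\PP$ queries (binary search over $\tau$) compute $v(\alpha):=\max_\beta a(\alpha,\beta)$ to additive error $1/\poly$; (iii) accept with probability equal to this estimate. A single run accepts with probability $\mathbb{E}_{\alpha\sim p}[v(\alpha)]=p^\star\pm 1/\poly$, so repeating polynomially often and taking a majority decides between the $\ge c$ and $\le s$ cases by a Chernoff bound.

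The main obstacle is exactly this last inclusion: the tempting-but-wrong approach is to have the simulating machine physically carry Arthur's post-measurement residual state while resolving Merlin's optimal reply, but that state is quantum and cannot be handed to a classical oracle. The crux is the observation that the residual state is a deterministic function of $U_1$ and $\alpha$, so the only quantity the oracle needs about it --- the conditional acceptance probability $a(\alpha,\beta)$ --- is merely a ratio of acceptance probabilities of explicit quantum circuits, hence within reach of a counting oracle; keeping this inside $\NP^\PP$ rather than a larger class is the heart of the argument. I expect the remaining bookkeeping --- amplification, forcing the various reductions to land inside promise gaps, the precision of the binary search, and noting that a nonzero $p_\alpha$ corresponds to a genuinely positive integer so the ratio is well defined --- to be routine.
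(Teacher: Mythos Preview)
Your proposal is correct. The first three relations match the paper's arguments essentially verbatim (the paper handles $\QCAM \subseteq \BP\cdot\QCMA$ via a Markov-style bound after amplifying only to $(0.9,0.1)$ rather than to exponential error, but this is cosmetic).

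For the final inclusion $\QCIP[2]\subseteq\BQP^{\NP^\PP}$, your route differs from the paper's in two respects. First, the paper realizes the $\PP$ oracle via Aaronson's $\PP=\PostBQP$: the predicate ``conditioned on $V_1^x$ producing message $y$, does the verifier accept $z$ with probability $\ge 2/3$?'' is literally a $\PostBQP$ promise problem, since postselection \emph{is} conditioning. You instead unpack this into $\mathsf{GapP}$: writing both $p_\alpha$ and $p_\alpha\,a(\alpha,\beta)$ as $G/2^k$ for $\mathsf{GapP}$ functions $G$, the threshold ``$a(\alpha,\beta)\ge\tau$'' becomes a sign test on a $\mathsf{GapP}$ combination and hence a $\PP$ predicate. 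This is exactly what underlies $\PostBQP=\PP$, so the two are equivalent; the paper's version is shorter because postselection packages the conditioning in one step, while yours is more self-contained. Second, the paper amplifies the $\QCIP[2]$ protocol to $(0.9,0.1)$ and then makes a \emph{single} $\NP^\PP$ query ``$\exists z:\ a(y,z)\ge 2/3$?'', using a Markov argument to show this answers correctly on a $0.7$ fraction of sampled $y$'s. You instead binary-search for $\max_\beta a(\alpha,\beta)$ to additive $1/\poly$, accept with that probability, and Chernoff over repetitions. Both stay within $\BQP^{\NP^\PP}$; the paper's version uses one oracle call per sample, yours polynomially many, but the class is the same.
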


$\BP \cdot \QCMA$ (resp.\ $\BQ \cdot \QCMA$) is the class of promise problems that admit a randomized (resp.\ quantum) many-one reduction to a promise problem in $\QCMA$. 
We note that $\QCAM = \BP \cdot \QCMA$ was originally proven by Marriott \cite{marriott2003nondeterminism}, but we include a complete proof in \cref{sec:qcip-qcam} in more standard notation.

The containments involving $\BP \cdot \QCMA$ and $\BQ \cdot \QCMA$ are reasonably straightforward applications of the definitions.
Placing an upper bound on $\QCIP[2]$, however, takes more effort.
Intuitively, it works as follows: first, use the base $\BQP$ machine to generate the verifier's round-$1$ message.
Then, we will use the $\NP^\PP$ machine to simulate the prover.
The idea is to nondeterministically guess the prover's round-$2$ message and then verify using the $\PP$ oracle whether that message would be accepted by the verifier.
$\PP$ suffices for this step because of Aaronson's $\PP = \PostBQP$ theorem~\cite{aaronson2005quantum}, which shows that $\PP$ equals the set of problems decidable by an efficient quantum machine with postselection.
Using postselection, one can condition on producing the same message that the verifier sampled in round $1$, resulting in the same residual state that the verifier uses to decide acceptance or rejection at the end.

Unlike the classical equivalence $\IP[2] = \AM$, \Cref{thm:qcam_containments_intro} hints that $\QCIP[2]$ is more powerful than $\QCAM$, because the set of problems quantumly reducible to $\QCMA$ is plausibly larger than the set classically reducible to $\QCMA$.
However, this distinction alone does not make full use of $\QCIP[2]$'s extra power, as $\BQ \cdot \QCMA$ is a class that uses no post-measurement branching.
Looking at the higher end of the containments, we found it considerably more challenging to place an upper bound on $\QCIP[2]$ than $\QCAM$ \textit{precisely because} the former uses post-measurement branching, and thus simulating Merlin requires a handle on the post-measurement state.

\subsection{Open Problems}

We conclude with some directions for future work.

\begin{enumerate}
    \item We proved that $\uQIP[3] = \NEXP$. Are there quantum proof systems between $\QMA(2)$ and $\uQIP[3]$ that capture $\NEXP$? An unentangled version of $\QMAM$ is a natural candidate to study.
    
    \item A key difference between $\QCAM$ and $\QCIP[2]$ (and even between $\BQ \cdot \QCMA$ and $\QCIP[2]$) is the power of post-measurement branching. What more can be said about this power? For instance, can one show stronger containments than $\BQ \cdot \QCMA \subseteq \QCIP[2]$? 
    
    \item It is known that $\IP[k] = \IP[2] = \AM$ for every constant $k \geq 2$~\cite{babaicollapse,goldwasser1986publicvsprivate,BM88-am}. Is an analogous collapse true for $\QCIP[k]$?  
    
    \item Are there oracles relative to which any of the containments in \Cref{thm:qcam_containments_intro} are strict?
\end{enumerate}

\section{Unentangled Quantum Interactive Proofs}\label{sec:uqip3}
In this section, we introduce an unentangled three-round quantum interactive proof system, denoted $\uQIP[3]$. 
Our definition mirrors $\QIP[3]$, except that Merlin's actions are restricted to entanglement-breaking channels.
After defining our model, we prove that $\uQIP[3] = \NEXP$.
The result adapts a certain type of quantum PCP for $\NEXP$ introduced by Raz~\cite{Raz05-qipcp} (quoted below in \Cref{lem:raz_nexp_pcp}).

We begin by recalling the definition of $\QIP[3]$.
A $\QIP[3]$ verification procedure is specified by a polynomial-time uniformly generated family of quantum circuits $V = \{V_1^x, V_2^x : x \in \{0,1\}^*\}$. 
On an input $x$ of length $n$, these circuits determine the actions of the verifier across the three-message interaction.
Each circuit acts on $\poly(n)$-sized registers partitioned into \emph{message} qubits $\mathcal{M}$, exchanged with the prover, and \emph{verifier workspace} qubits $\mathcal{V}$, retained by the verifier throughout.

The prover is an unrestricted family $P = \{P_1^x, P_2^x : x \in \{0,1\}^*\}$ of arbitrary quantum operations that likewise act on the same message qubits $\mathcal{M}$ and \emph{prover workspace} qubits $\mathcal{P}$ (which need not be polynomially-bounded in size). 
The interaction proceeds as follows:
\begin{enumerate}
    \item The three registers $\mathcal{V}$, $\mathcal{M}$, $\mathcal{P}$ are each initialized to the all-zeros state.
    \item The prover applies $P_1^x$ to $\mathcal{P}$ and $\mathcal{M}$.
    \item The verifier applies $V_1^x$ to $\mathcal{V}$ and $\mathcal{M}$.
    \item The prover applies $P_2^x$ to $\mathcal{P}$ and $\mathcal{M}$.
    \item  Finally, the verifier applies $V_2^x$ to $\mathcal{V}$ and $\mathcal{M}$ and measures a designated output qubit to decide acceptance or rejection.
\end{enumerate}

Sometimes the prover and verifier are called ``Merlin'' and ``Arthur'' respectively.
We will typically only use these names in interactive protocols where the verifier's messages consist of public coin tosses, consistent with the distinction between the complexity classes $\AM[k]$ and $\IP[k]$.

We now formally define the class $\QIP[3]$. 

\begin{definition}[{$\QIP[3]$}]
\label{def:QIP[3]}
A promise problem $A = (\Ayes, \Ano)$ is in $\QIP[3, c, s]$ for polynomial-time computable functions $c,s : \N \to [0,1]$ if there exists a $\QIP[3]$ verification procedure $V$ with the following properties: 
    \begin{itemize}
        \item \emph{Completeness.} For all $x \in \Ayes$, there exists a quantum prover $P$ that causes $V$ to accept $x$ with probability at least $c(\abs{x})$.
        \item \emph{Soundness.} For all $x \in \Ano$, every quantum prover $P$ causes $V$ to accept $x$ with probability at most $s(\abs{x})$.
    \end{itemize}
We define $\QIP[3] \coloneqq \QIP[3, 2/3, 1/3]$.
\end{definition}

It is known that any polynomial-round quantum interaction can be parallelized to three rounds \cite{kitaev2000parallelization} and that $\QIP[3, 2/3, 1/3] = \QIP[3, 1, 2^{-\poly}] = \PSPACE$ \cite{jain2011qip}.  

\begin{remark}[Entanglement between registers]
A key feature of $\QIP[3]$---one that is central to this work---is that the private workspaces of both the prover and verifier may be entangled with the message registers exchanged during the interaction.
\end{remark}

We turn to defining our unentangled variant of $\QIP[3]$, which adds an additional restriction on the prover involving \textit{entanglement-breaking} channels~\cite{HSR03-entanglement}.
These channels are so-called because they are precisely the set of channels $\Phi$ with the property that for any input density matrix $\rho$, $(I \otimes \Phi)(\rho)$ is separable (across the cut between the output of $\Phi$ and the tensored identity factor).
An equivalent characterization of an entanglement-breaking channel $\Phi$ is one that takes the form
\[
  \Phi(\rho) \;=\; \sum_{\ell} \tr(E_\ell \rho)\, \ket{\phi_\ell}\!\bra{\phi_\ell}.
\]
for some POVM $\{E_\ell\}$ and set of states $\ket{\phi_\ell}$.
Operationally, this means that an entanglement-breaking channel applies a measurement to the input state and prepares a new state conditioned on the classical outcome of the measurement.
For this reason, entanglement-breaking channels are sometimes called \textit{measure-and-prepare} channels.

An \textit{unentangled} $\QIP[3]$ prover is a family $P = \{P_1^x, P_2^x : x \in \{0,1\}^*\}$ of quantum operations that likewise act on $\mathcal{P}$ and $\mathcal{M}$, subject to the restriction that each $P_i^x$ is the sequential composition of:
\begin{enumerate}
    \item Applying some arbitrary channel $\Lambda_i^x$ to $\mathcal{P}$ and $\mathcal{M}$;
    \item Applying an entanglement-breaking channel $\Phi_i^x$ whose input qubits are $\mathcal{S} \cup \mathcal{M}$ and output qubits are $\mathcal{M}$, for some $\mathcal{S} \subseteq \mathcal{P}$;
    \item Reinitializing the qubits in $\mathcal{S}$ to $\ket{0}$. (This step is only needed to preserve the size of $\mathcal{P}$.)
\end{enumerate}

Use of the entanglement-breaking channel $\Phi_i^x$ ensures that after completion of the prover's operation $P_i^x$, the message register $\mathcal{M}$ is unentangled from the prover's workspace qubits $\mathcal{P}$.

\begin{figure}
    \centering
    \[
    \Qcircuit @C=1em @R=1em @!R {
    &&& \raisebox{-2em}{$P_1^x$} &&&&&& \raisebox{-2em}{$P_2^x$}
        \\
    \lstick{\mathcal{P} \setminus \mathcal{S}\quad\ket{0}} 
        & {/}\qw
        & \multigate{2}{\Lambda_1^x}
        & \qw
        & \qw
        & \qw
        & \qw
        & \qw
        & \multigate{2}{\Lambda_2^x}
        & \qw
        & \qw
        & \qw
        & \qw
        & \qw
        & \qw
        & \qw
        \\
    \lstick{\mathcal{S}\quad\ket{0}} 
        & {/}\qw
        & \ghost{\Lambda_1^x}
        & \multimeasureD{1}{\{E_{1,\ell}^x\}}
        & 
        & \ket{0}
        &
        & \qw
        & \ghost{\Lambda_2^x}
        & \multimeasureD{1}{\{E_{2,\ell}^x\}}
        & 
        & \ket{0}
        &
        & \qw
        & \qw
        & \qw
        \\
    \lstick{\mathcal{M}\quad\ket{0}} 
        & {/}\qw
        & \ghost{\Lambda_1^x}
        & \ghost{\{E_{1,\ell}^x\}}
        & \cw
        & \ket{\phi_{1,\ell}}
        &
        & \multigate{1}{V_1^x}
        & \ghost{\Lambda_2^x}
        & \ghost{\{E_{2,\ell}^x\}}
        & \cw
        & \ket{\phi_{2,\ell}}
        &
        & \qw
        & \multigate{1}{V_2^x}
        & \qw
        \\
    \lstick{\mathcal{V}\quad\ket{0}} 
        & {/}\qw
        & \qw
        & \qw
        & \qw
        & \qw
        & \qw
        & \ghost{V_1^x}
        & \qw
        & \qw
        & \qw
        & \qw
        & \qw
        & \qw
        & \ghost{V_2^x}
        & \meter
    \gategroup{2}{3}{4}{7}{.7em}{--}
    \gategroup{2}{9}{4}{13}{.7em}{--}
    }
    \]
    \caption{\label{fig:uQIP[3]_general} The general form of a $\uQIP[3]$ interaction between an unentangled prover $P = \{P_1^x, P_2^x\}$ and verifier $V = \{V_1^x, V_2^x\}$.}
\end{figure}
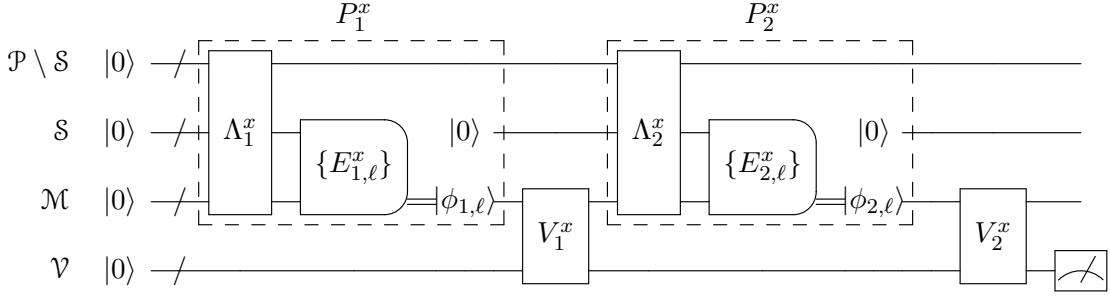

\Cref{fig:uQIP[3]_general} depicts the interaction between an unentangled prover and verifier.
This type of interaction underlies $\uQIP[3]$, whose formal definition replaces the prover in $\QIP[3]$ with an unentangled prover.

\begin{definition}[{$\uQIP[3]$}]
A promise problem $A=(A_{\mathrm{yes}},A_{\mathrm{no}})$ is in $\uQIP[3,c,s]$ for polynomial-time computable $c,s:\N\to[0,1]$ if there exists a $\QIP[3]$ verification procedure $V$ with the following properties:
\begin{itemize}
  \item \emph{Completeness.} For all $x\in A_{\mathrm{yes}}$, there exists an \textit{unentangled} prover $P$ that makes $V$ accept with probability at least $c(|x|)$.
  \item \emph{Soundness.} For all $x\in A_{\mathrm{no}}$, every \textit{unentangled} prover $P$ makes $V$ accept with probability at most $s(|x|)$.
\end{itemize}
We define $\uQIP[3] \coloneqq \uQIP[3, 2/3, 1/3]$.
\end{definition}

When defining such complexity classes, one should always ask whether the the choices of completeness $2/3$ and soundness $1/3$ are arbitrary---in particular, can they be amplified, say by parallel repetition?
It will follow from our results that the completeness and soundness parameters can be amplified to $1$ and $o(1)$, respectively (\Cref{cor:uqip_amplification}).

Notice that $\uQIP[3]$ places no restrictions on the verifier's ability to send entangled messages to the prover; the promise merely guarantees that the prover never maintains entanglement with their sent messages.

\subsection{Upper Bounding \texorpdfstring{$\uQIP[3]$}{Unentangled QIP}}

To place an upper bound on $\uQIP[3]$, we first make note of some useful properties of entanglement-breaking channels that allow us to reduce the complexity of the prover.
This first lemma shows that one can straightforwardly upper bound the description complexity of an entanglement-breaking channel, which is \textit{a priori} infinite.

\begin{lemma}
    \label{lem:povm_size}
    Suppose $\Phi$ is an entanglement-breaking channel from $m$ qubits to $n$ qubits.
    Then $\Phi$ admits a decomposition in terms of a POVM $\{E_\ell\}$ and a set of pure states $\{\ket{\phi_\ell}\}$ with at most $4^{m+n}$ terms:
    \[
    \Phi(\rho) \;=\; \sum_{\ell = 1}^{4^{m + n}} \tr(E_\ell \rho)\, \ket{\phi_\ell}\!\bra{\phi_\ell}.
    \]
\end{lemma}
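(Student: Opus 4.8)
The plan is to start from the general measure-and-prepare form $\Phi(\rho) = \sum_{\ell} \tr(F_\ell \rho)\,\ket{\psi_\ell}\!\bra{\psi_\ell}$ guaranteed by the characterization of entanglement-breaking channels quoted earlier, where a priori the sum may have arbitrarily (even uncountably) many terms, and then argue that only finitely many terms are ever needed. The key observation is dimensional: the input space has dimension $d_{\mathrm{in}} = 2^m$, so the positive operators $F_\ell$ all live in the real vector space of $d_{\mathrm{in}} \times d_{\mathrm{in}}$ Hermitian matrices, which has dimension $d_{\mathrm{in}}^2 = 4^m$. Similarly, each $\ket{\psi_\ell}\!\bra{\psi_\ell}$ lives in the space of $d_{\mathrm{out}} \times d_{\mathrm{out}}$ Hermitian matrices with $d_{\mathrm{out}} = 2^n$, of real dimension $4^n$. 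I would package the channel as a single element of the tensor product of these two spaces — concretely, the Choi matrix $J(\Phi) = (I \otimes \Phi)(\ket{\Omega}\!\bra{\Omega})$ (up to normalization) is a positive semidefinite operator on $d_{\mathrm{in}} \cdot d_{\mathrm{out}}$ dimensions, and the separable decomposition of $\Phi$ corresponds exactly to writing $J(\Phi) = \sum_\ell E_\ell^T \otimes \ket{\phi_\ell}\!\bra{\phi_\ell}$ as a conic combination of product states (after suitably renaming $F_\ell$ to $E_\ell$ so that $\{E_\ell\}$ forms a POVM).

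The workhorse is then Carathéodory's theorem: any point in the convex cone generated by a set $S$ of vectors in a real vector space of dimension $D$ can be written as a conic combination of at most $D$ elements of $S$. Here $S$ is the set of product operators $\{E^T \otimes \ket{\phi}\!\bra{\phi}\}$ ranging over all POVM-compatible positive operators $E$ and all pure states $\ket{\phi}$, sitting inside the space of Hermitian operators on $d_{\mathrm{in}} \cdot d_{\mathrm{out}}$ dimensions, which has real dimension $(d_{\mathrm{in}} d_{\mathrm{out}})^2 = 4^{m+n}$. Applying Carathéodory to the Choi matrix $J(\Phi) \in \mathrm{cone}(S)$ yields a decomposition with at most $4^{m+n}$ terms; translating back through the Choi isomorphism gives $\Phi(\rho) = \sum_{\ell=1}^{4^{m+n}} \tr(E_\ell \rho)\,\ket{\phi_\ell}\!\bra{\phi_\ell}$, and one checks that the trace-preserving condition $\Phi(I) \mapsto I$ (equivalently $\sum_\ell \tr(E_\ell)\cdots$, really $\sum_\ell E_\ell = I$) is automatically inherited since it is a linear constraint satisfied by $J(\Phi)$ and hence by any decomposition of it.

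The main technical care — and the step I expect to be the only real obstacle — is handling the fact that $S$ is not closed (the condition "$E$ extends to a POVM" really just means $0 \preceq E \preceq I$, which is fine and closed, but one should double-check the bookkeeping) and, more importantly, that Carathéodory for cones is usually stated for finite or compact generating sets whereas here $S$ is an infinite compact set; the standard fix is to note that $S$ is compact (it is the continuous image of the compact set $\{E : 0 \preceq E \preceq I\} \times \{\text{pure states}\}$), so $\mathrm{cone}(S)$ is well-behaved and the conic Carathéodory bound of $D = 4^{m+n}$ applies. One could alternatively bypass compactness subtleties entirely by first invoking the given characterization to get \emph{some} (possibly infinite but, by a separate standard argument, countable) decomposition, truncating to a large finite sub-collection that approximates $\Phi$ arbitrarily well, applying finite Carathéodory, and then passing to the limit using compactness of the set of $\le 4^{m+n}$-term decompositions — but the direct compact-cone version is cleaner. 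Everything else is routine linear algebra, so I would keep the writeup short: state Carathéodory, identify the ambient dimension as $4^{m+n}$, invoke compactness of the product-operator set in one line, and conclude.
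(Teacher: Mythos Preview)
Your proposal is correct and follows essentially the same route as the paper: both pass to the Choi matrix, use that it is separable, and invoke a Carath\'eodory-type dimension count in the $4^{m+n}$-dimensional real space of Hermitian operators on $m+n$ qubits to bound the number of terms, then read off the POVM from the partial trace. The only cosmetic difference is that the paper quotes Horodecki's bound on the number of pure product states needed to decompose a separable state (which is itself Carath\'eodory), whereas you apply Carath\'eodory directly; your worry about compactness is also unnecessary, since conic Carath\'eodory holds for arbitrary generating sets once you know the Choi state is a (finite) convex combination of pure product states, which follows because in finite dimensions the convex hull of the compact set of pure product states is already closed.
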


\begin{proof}
    The proof follows \cite[Theorem 4]{HSR03-entanglement} exactly, with the sole addition of some extra accounting.
    \cite[Theorem 4]{HSR03-entanglement} shows that if $\Phi$ is entanglement-breaking, then its Choi state
    \[
    (I \otimes \Phi)(\ket{\beta}\!\bra{\beta})
    \]
    is separable (i.e., a mixture of product states), where $\ket{\beta} = \frac{1}{\sqrt{2^m}} \sum_{j \in \{0,1\}^m} \ket{j}\ket{j}$.
    By a result of Horodecki~\cite{Hor97-separability}, every separable state on $m + n$ qubits is a convex combination of at most $4^{m+n}$ pure product states, and therefore the Choi state admits a decomposition:
    \[
    (I \otimes \Phi)(\ket{\beta}\!\bra{\beta}) = \sum_{\ell = 1}^{4^{m+n}} p_\ell \ket{v_\ell}\!\bra{v_\ell} \otimes \ket{w_\ell}\!\bra{w_\ell},
    \]
    where $\{p_\ell\}$ are probabilities summing to $1$ and $\{\ket{w_\ell}\}$, $\{\ket{v_\ell}\}$ are lists of normalized pure states.

    Now let $\Omega$ be the map
    \[
    \Omega(\rho) \coloneqq \sum_{\ell = 1}^{4^{m+n}} \tr(d p_\ell \ket{v_\ell}\!\bra{v_\ell}\rho) \ket{w_\ell}\!\bra{w_\ell},
    \]
    which has the form required by the lemma.
    Using $\ket{v_\ell} = \sum_j \ket{j} \braket{j}{v_\ell}$, one easily verifies that
    \begin{align*}
    (I \otimes \Omega)(\ket{\beta}\!\bra{\beta})
    &= \sum_{jk\ell}\ket{j}\!\bra{k} \otimes \ket{w_\ell}\!\bra{w_\ell} p_\ell \braket{j}{v_\ell}\braket{v_\ell}{k}\\
    &= \sum_{\ell} p_\ell \ket{v_\ell}\!\bra{v_\ell} \otimes \ket{w_\ell}\!\bra{w_\ell}\\
    &= (I \otimes \Phi)(\ket{\beta}\!\bra{\beta}).
    \end{align*}
    Two channels are equal if and only if their Choi states are the same, so $\Phi = \Omega$.

    To complete the proof, one must verify that $\{dp_\ell \ket{v_\ell}\!\bra{v_\ell}\}$ is a POVM.
    This follows by taking a partial trace of the Choi state:
    \begin{align*}
        \tr_2 \left[(I \otimes \Phi)(\ket{\beta}\!\bra{\beta})\right]
        &= \frac{I}{d}\\
        &= \sum_{\ell = 1}^{4^{m+n}} p_\ell \ket{v_\ell}\!\bra{v_\ell}.\qedhere
    \end{align*}
\end{proof}

Next, we argue that the prover can further simplify their strategy by eliminating the use of private qubits:

\begin{lemma}
    \label{lem:prover_canonical}
    Suppose there exists a $\uQIP[3]$ prover $P$ that makes $V$ accept with probability at least $p(|x|)$.
    Then there exists a prover $\overline{P}$ that also makes $V$ accept with probability at least $p(|x|)$, but for which
    \begin{enumerate}
        \item\label{item:pure_first_message} The output of $\overline{P}_1^{x}$ is a pure state on $\mathcal{M}$,
        \item\label{item:no_private} $\overline{P}$ uses no prover workspace qubits, and
        \item\label{item:entanglement_breaking} $\overline{P}_1^{x}$ and $\overline{P}_2^{x}$ are themselves entanglement-breaking channels.
    \end{enumerate}
\end{lemma}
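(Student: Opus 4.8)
The plan is to massage the prover's strategy in three stages, one corresponding to each claimed property, processing the messages from last to first so that changes made early are not undone later. Throughout, I will repeatedly use the equivalent characterizations of entanglement-breaking channels: the measure-and-prepare form $\Phi(\rho) = \sum_\ell \tr(E_\ell \rho)\ket{\phi_\ell}\!\bra{\phi_\ell}$, and the fact that such a channel destroys all entanglement between its output and anything correlated with its input. The key conceptual point is that the verifier only ever touches $\mathcal{M}$ and $\mathcal{V}$, never the prover's private register $\mathcal{P}$, so we are free to modify how the prover internally implements a given sequence of message states.

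First I would handle property~\ref{item:entanglement_breaking}. By definition, $P_i^x$ is the composition of an arbitrary channel $\Lambda_i^x$ on $\mathcal{P}\mathcal{M}$, followed by an entanglement-breaking channel $\Phi_i^x$ from $\mathcal{S}\mathcal{M}$ to $\mathcal{M}$, followed by reinitializing $\mathcal{S}$. Since $\Phi_i^x$ is entanglement-breaking, its action factors as a measurement followed by a state preparation; the measurement can be absorbed into $\Lambda_i^x$ (which produces a classical outcome in $\mathcal{P}$) and the preparation depends only on that classical outcome. Hence the composite map, viewed as a channel from $\mathcal{P}\mathcal{M}$ to $\mathcal{P}\mathcal{M}$, reads its input, produces a classical label, and prepares a fresh state on $\mathcal{M}$ (together with whatever it does to $\mathcal{P}$). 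Because the freshly prepared $\mathcal{M}$ is uncorrelated with everything given the label, and the label is itself a classical function of the input, the map from $\mathcal{P}\mathcal{M}$ to $\mathcal{M}$ obtained by tracing out $\mathcal{P}$ afterward is entanglement-breaking; more to the point, since the verifier's subsequent operations ignore $\mathcal{P}$, we may simply declare $\overline{P}_i^x$ to be this entanglement-breaking channel and discard the $\mathcal{P}$-output, with no change to the verifier's acceptance probability.

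Next, property~\ref{item:no_private}: once each $\overline{P}_i^x$ is entanglement-breaking, the prover's workspace is superfluous. Write $\overline{P}_1^x(\cdot) = \sum_\ell \tr(E_\ell\, \cdot)\ket{\phi_\ell}\!\bra{\phi_\ell}$; applied to the all-zeros input this just outputs the fixed mixed state $\sigma_1 = \sum_\ell \tr(E_\ell \ket{0}\!\bra{0})\ket{\phi_\ell}\!\bra{\phi_\ell}$ on $\mathcal{M}$, independent of any private data, so no workspace is needed to produce it. For the second message, $\overline{P}_2^x$ is an entanglement-breaking channel whose only quantum input is $\mathcal{M}$ (any dependence on $\mathcal{P}$ was already eliminated in the previous step, since that register now carries nothing relevant) — it measures the current $\mathcal{M}$, which contains the verifier's round-two message, and prepares a new state on $\mathcal{M}$. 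This again requires no private qubits. Removing $\mathcal{P}$ does not alter the joint state on $\mathcal{M}\mathcal{V}$ at any point, hence the acceptance probability is preserved. Finally, property~\ref{item:pure_first_message} is obtained by a standard purification-of-the-optimal-choice argument: $\sigma_1$ is a mixture $\sum_\ell q_\ell \ket{\phi_\ell}\!\bra{\phi_\ell}$ of pure states, and by convexity (linearity of the whole protocol in the first message) at least one $\ket{\phi_{\ell^*}}$ achieves acceptance probability at least $p(|x|)$; replace $\overline{P}_1^x$ with the channel that ignores its input and outputs the pure state $\ket{\phi_{\ell^*}}\!\bra{\phi_{\ell^*}}$, which is trivially entanglement-breaking and uses no workspace, so all three properties hold simultaneously.

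I expect the main obstacle to be the bookkeeping in the first stage — carefully arguing that absorbing the measurement of $\Phi_i^x$ into $\Lambda_i^x$ and discarding $\mathcal{P}$ genuinely leaves $\overline{P}_i^x$ entanglement-breaking as a channel on $\mathcal{M}$ alone, rather than merely ``entanglement-breaking relative to some extra register.'' The subtlety is that $\Lambda_i^x$ can entangle $\mathcal{P}$ with $\mathcal{M}$ before $\Phi_i^x$ acts, so one must check that the classical-outcome structure of $\Phi_i^x$, together with the verifier's indifference to $\mathcal{P}$, collapses this to a genuine measure-and-prepare map. Everything else is routine: convexity for the pure-state reduction, and the observation that fixed-input entanglement-breaking channels emit a fixed output state needing no memory.
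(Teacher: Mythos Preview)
Your proposal has a genuine gap in stage~1. You claim that because ``the verifier's subsequent operations ignore $\mathcal{P}$,'' you may trace out $\mathcal{P}$ after $P_1^x$ with no change to the acceptance probability. This is false: the verifier ignores $\mathcal{P}$, but the \emph{prover's own round-2 operation $P_2^x$ does not}. After $P_1^x$, the joint state on $\mathcal{P}\mathcal{M}$ is $\sum_\ell p_\ell\,\sigma_{\ell,\mathcal{P}}\otimes|\psi_\ell\rangle\!\langle\psi_\ell|_{\mathcal{M}}$, so $\mathcal{P}$ remains classically correlated with the message the verifier holds, and $\Lambda_2^x$ may exploit that correlation. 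Concretely: let $P_1^x$ flip a fair coin $b$, store $b$ in $\mathcal{P}$, and send $|b\rangle$ on $\mathcal{M}$; let the verifier copy $\mathcal{M}$ into $\mathcal{V}$ and return $|0\rangle$; let $P_2^x$ read $b$ from $\mathcal{P}$ and send $|b\rangle$; let the verifier accept iff the two messages match. The original prover succeeds with probability $1$, but if $\mathcal{P}$ is discarded after round~1 the prover can only guess $b$ and succeeds with probability $1/2$. Your stage~2 then inherits this error when it asserts that $\overline{P}_2^x$ ``has only quantum input $\mathcal{M}$'' because $\mathcal{P}$ ``carries nothing relevant''---that is exactly what was not established.

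The paper repairs this by reversing your order: it applies the convexity argument \emph{first}, fixing a single outcome $\ell^*$ so that the private state becomes a \emph{fixed} $\sigma_{\ell^*,\mathcal{P}}$ uncorrelated with anything. Only then can $\sigma_{\ell^*}$ be hardcoded into the round-2 channel, by setting $\overline{P}_2^x(\rho_{\mathcal{M}}) = \sum_\ell \tr\!\big(E_\ell\,\Lambda_2^x(\sigma_{\ell^*}\otimes\rho_{\mathcal{M}})\big)|\phi_\ell\rangle\!\langle\phi_\ell|$, which is now a genuine channel on $\mathcal{M}$ alone and is entanglement-breaking via the adjoint POVM $\{\Psi_2^{x*}(E_\ell)\}$. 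The obstacle you flagged (showing the composite is measure-and-prepare) is not the real difficulty; the real difficulty is that the workspace cannot be eliminated until after it has been decoupled from the message by fixing $\ell^*$.
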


\begin{proof}
    After the prover applies $P_1^x$ to $\ket{0}_{\mathcal{P}\mathcal{M}}$ the state of registers $\mathcal{P}$ and $\mathcal{M}$ has the form
    \[
    \sum_\ell p_\ell \; \sigma_{\ell, \mathcal{P}} \otimes \ket{\psi_\ell}\!\bra{\psi_\ell}_{\mathcal{M}},
    \]
    for some probabilities $\{p_\ell\}$, mixed states $\{\sigma_\ell\}$, and pure states $\{\ket{\psi_\ell}\}$ parameterized by the possible outcomes $\ell$ of the POVM underlying the entanglement-breaking channel $\Phi_1^x$.
    Consider postselecting on a particular outcome $\ell$.
    By convexity, there must exist a choice $\ell = \ell^*$ such that replacing $P_1^x$ with direct preparation of $\sigma_{\ell^*, \mathcal{P}} \otimes \ket{\psi_{\ell^*}}\!\bra{\psi_{\ell^*}}_{\mathcal{M}}$ causes the interaction between prover and verifier to accept with probability at least $p(|x|)$.
    Let $P'$ be the prover derived from $P$ by making this replacement, which causes it to satisfy \Cref{item:pure_first_message}.
    To avoid confusion with notation later in the proof, we drop the $\ell^*$ subscript and call the initial state simply $\sigma_{\mathcal{P}} \otimes \ket{\psi}\!\bra{\psi}_{\mathcal{M}}$.

    Next, we observe that one can eliminate the need to prepare $\sigma_{\mathcal{P}}$ on a second register, and thus remove the private workspace qubits.
    Let $\{E_\ell\}$ be the POVM and $\{\ket{\phi_\ell}\}$ be the set of states underlying $\Phi_2^x$, for which
    \[
      \Phi_2^x(\rho_{\mathcal{PM}}) \;=\; \sum_{\ell} \tr(E_\ell \rho_{\mathcal{PM}})\, \ket{\phi_\ell}\!\bra{\phi_\ell}_{\mathcal{M}}.
    \]
    Now define $\overline{P}$ by
    \[
    \overline{P}^x_1(\rho_{\mathcal{M}}) \;\coloneqq\; \tr(\rho_\mathcal{M}) \ket{\psi}\!\bra{\psi}_\mathcal{M}
    \]
    and
    \[
        \overline{P}_2^x(\rho_\mathcal{M}) \;\coloneqq\; \sum_\ell \tr(E_\ell \Lambda_2^x(\sigma_\mathcal{P} \otimes \rho_\mathcal{M})) \ket{\phi_\ell}\!\bra{\phi_\ell}.
    \]
    Then clearly $\overline{P}$ satisfies \Cref{item:no_private}, since $\overline{P}^x_1$ and $\overline{P}^x_2$ both map $\mathcal{M}$ to $\mathcal{M}$.
    Additionally, the interaction between $\overline{P}$ and $V$ has the same acceptance probability as that between $P'$ and $V$, because we essentially deferred initializing $\sigma_\mathcal{P}$ until $\overline{P}_2^x$, and then traced out $\mathcal{P}$ afterwards.

    The last condition we must verify is that $\overline{P}_1^{x}$ and $\overline{P}_2^{x}$ are entanglement-breaking channels.
    This is immediate for $\overline{P}_1^{x}$.
    To see that $\overline{P}_2^{x}$ is entanglement-breaking, first let $\Psi_2^x(\rho_{\mathcal{M}}) \coloneqq \Lambda_2^x(\sigma_\mathcal{P} \otimes \rho_\mathcal{M})$.
    Then
    \begin{align*}
        \overline{P}_2^x(\rho_\mathcal{M}) \;&=\; \sum_\ell \tr(E_\ell \Psi_2^x(\rho_\mathcal{M})) \ket{\phi_\ell}\!\bra{\phi_\ell}\\
        &=\; \sum_\ell \tr(\Psi_2^{x*}(E_\ell) \rho_\mathcal{M}) \ket{\phi_\ell}\!\bra{\phi_\ell},
    \end{align*}
    where $\Psi_2^{x*}$ is the channel adjoint of $\Psi_2^{x}$.
    This satisfies the definition of entanglement breaking because $\Psi_2^{x*}(E_\ell)$ is a POVM, since the adjoint of a CPTP map is completely positive and unital.
\end{proof}

\begin{figure}
    \centering
    \[
    \Qcircuit @C=1em @R=1em @!R {
    \lstick{\raisebox{-1.5em}{$P_1^x$}\ }&&& \raisebox{-1.5em}{\qquad\quad $P_2^x$}
        \\
    \lstick{\mathcal{M}\ \ 
    \tikz[baseline=(X.base)] \node[draw, dashed, dash pattern=on 5pt off 5pt, inner sep=3pt](X){$\ket{\psi}$};
    } 
        & {/}\qw
        & \multigate{1}{V_1^x}
        & \measureD{\{E_\ell\}}
        & \cw
        & \ket{\phi_\ell}
        & 
        & \multigate{1}{V_2^x}
        & \qw
        \\
    \lstick{\mathcal{V}\ \quad\ket{0}} 
        & {/}\qw
        & \ghost{V_1^x}
        & \qw
        & \qw
        & \qw
        & \qw
        & \ghost{V_2^x}
        & \meter
        \gategroup{2}{4}{2}{7}{.7em}{--}
    }
    \]
    \caption{\label{fig:uQIP[3]-simplified} The simplified form of $\uQIP[3]$ interaction between an unentangled prover $P = \{P_1^x, P_2^x\}$, in canonical form that derives from \Cref{lem:prover_canonical}, and verifier $V = \{V_1^x, V_2^x\}$.}
\end{figure}
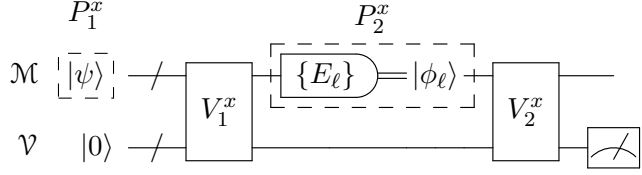

\Cref{fig:uQIP[3]-simplified} shows the canonical form of an unentangled prover that derives from \Cref{lem:prover_canonical}
This simplification lets us simulate $\uQIP[3]$ in nondeterministic exponential time.

\begin{theorem}
    \label{thm:uQIP-in-NEXP}
    For any $c(n) - s(n) \ge \frac{1}{2^{\poly(n)}}$, $\uQIP[3, c, s] \subseteq \NEXP$.
\end{theorem}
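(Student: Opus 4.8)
The plan is to have the $\NEXP$ machine nondeterministically guess a finite-precision description of the prover's strategy and then deterministically compute the resulting acceptance probability. The starting point is \Cref{lem:prover_canonical}: any $\uQIP[3]$ prover may be assumed to use no workspace qubits, to send a pure state $\ket{\psi}$ on the $\poly(n)$-qubit register $\mathcal{M}$ in round one, and to apply an entanglement-breaking channel $\overline{P}_2^x : \mathcal{M} \to \mathcal{M}$ in round two (the simplified picture of \Cref{fig:uQIP[3]-simplified}). By \Cref{lem:povm_size}, that channel decomposes into $4^{2|\mathcal{M}|} = 2^{\poly(n)}$ terms, and inspecting the proof of that lemma, each POVM element is rank one, of the form $d\,p_\ell\ket{v_\ell}\!\bra{v_\ell}$. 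Hence the entire canonical prover is specified by the $2^{|\mathcal{M}|} = 2^{\poly(n)}$ amplitudes of $\ket{\psi}$ together with, for each of the $2^{\poly(n)}$ values of $\ell$, a nonnegative number $p_\ell$, a unit vector $\ket{v_\ell}$, and a unit output state $\ket{\phi_\ell}$. Truncating every real parameter to $q(n)$ bits for a large enough polynomial $q$, this is a classical string of length $2^{\poly(n)}$, which the $\NEXP$ machine guesses.

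Next I would have the machine verify, deterministically, that the guessed string describes an approximately valid prover: that $\ket{\psi}$ and all the $\ket{v_\ell}, \ket{\phi_\ell}$ have norm within $2^{-q'(n)}$ of $1$, that each $p_\ell \geq 0$, and that $\sum_\ell d\,p_\ell\ket{v_\ell}\!\bra{v_\ell}$ is within $2^{-q'(n)}$ of the identity in operator norm. Each check is a linear-algebra computation on objects of dimension $2^{\poly(n)}$, hence doable in deterministic time $2^{\poly(n)}$; if any fails, reject. Otherwise the machine writes down the explicit $2^{\poly(n)}$-dimensional unitaries $V_1^x, V_2^x$ from the uniform circuit family, evolves the initial state $\ket{0}_{\mathcal{V}} \otimes \ket{\psi}_{\mathcal{M}}$ through $V_1^x$, then the guessed channel on $\mathcal{M}$ (producing an explicit $2^{\poly(n)}$-dimensional mixed state, a sum of $2^{\poly(n)}$ terms), then $V_2^x$, computes the acceptance probability $p_{\mathrm{acc}}$ of the designated output qubit --- again a $2^{\poly(n)}$-time matrix computation --- and accepts iff $p_{\mathrm{acc}} > \frac{c(n)+s(n)}{2}$.

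For correctness: on a yes-instance, \Cref{lem:prover_canonical} supplies a canonical prover with acceptance probability $\geq c(n)$, whose exact description passes the validity checks; since $p_{\mathrm{acc}}$ is a fixed constant-degree polynomial in the guessed parameters with coefficients of magnitude at most $2^{\poly(n)}$, it is Lipschitz with a $2^{\poly(n)}$ constant, so choosing $q$ large compared to the polynomial bounding $|\mathcal{M}|$ and to $\log(1/(c(n)-s(n))) = \poly(n)$ keeps the truncated prover's acceptance probability above $\frac{c(n)+s(n)}{2}$. On a no-instance, any string failing a check is rejected; any string passing is $2^{-q'(n)}$-close to a genuine entanglement-breaking prover (renormalize $\ket{\psi}$ and the $\ket{v_\ell}, \ket{\phi_\ell}$, and rescale the $p_\ell$ so the POVM sums exactly to $I$), whose acceptance probability is $\leq s(n)$ by soundness, so the guessed string's value is below $\frac{c(n)+s(n)}{2}$ once $q'$ is large enough, and the machine rejects.

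I expect the only real work to be the precision bookkeeping in the last paragraph: pinning down how large $q$ and $q'$ must be, and checking that ``approximately valid implies close to exactly valid'' holds with the perturbation growing by at most a $2^{\poly(n)}$ factor --- in particular that rescaling the rank-one POVM to sum exactly to $I$ disturbs the channel by only $2^{\poly(n)} \cdot 2^{-q'(n)}$. None of this is deep, since $p_{\mathrm{acc}}$ is an explicit bounded-degree polynomial in $2^{\poly(n)}$ variables, so every continuity estimate is a crude term-counting bound and the exponentially small gap $c(n)-s(n)$ leaves ample room.
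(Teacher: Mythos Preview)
Your proposal is correct and follows essentially the same approach as the paper: invoke \Cref{lem:prover_canonical} to put the prover in canonical form, use \Cref{lem:povm_size} to bound the description length of the entanglement-breaking channel by $2^{\poly(n)}$, nondeterministically guess a finite-precision description, and compute the acceptance probability in exponential time against the threshold $\tfrac{c+s}{2}$. The paper's own proof is terser---it simply asserts that one guesses the channels ``up to some small error in diamond norm'' and that the error can be kept below $c-s$---whereas you spell out the rank-one POVM parameterization, the validity checks, and the Lipschitz/renormalization bookkeeping; but the structure and the two key lemmas invoked are identical.
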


\begin{proof}
    Given $A \in \uQIP[3]$, let $V$ be a corresponding $\uQIP[3]$ verification procedure with completeness $c$ and soundness $s$.
    On input $x \in \{0,1\}^n$, to decide whether $x \in \Ayes$ or $x \in \Ano$, the $\NEXP$ procedure is to nondeterministically guess classical descriptions of the entanglement-breaking channels $P_1^n, P_2^n$ that act on $\mathcal{M}$ (up to some small error in diamond norm, say $\frac{c(|x|) - s(|x|)}{100}$, which requires $\poly(n)$ bits of precision), and then verify in exponential time whether the interaction between $P_1^n, P_2^n$ and $V_1^n, V_2^n$ causes the verifier to accept with probability at least $\frac{c(|x|) + s(|x|)}{2}$.
    Note crucially by \Cref{lem:povm_size} that the descriptions of $P_1^n$ and $P_2^n$ require at most $2^{\poly(n)}$ bits, as the $\NEXP$ machine can guess the lists of POVM elements and pure states that describe the channels.
    For any optimal prover, restricting attention to entanglement-breaking $P_1^n, P_2^n$ with no prover workspace qubits is without loss of generality, by \Cref{lem:prover_canonical}.
    As long as the total error incurred by the finite-precision approximation is less than $c(|x|) - s(|x|)$, the completeness/soundness guarantee of the $\uQIP[3]$ verifier ensures that ``yes'' instances have an accepting witness to the $\NEXP$ verifier, while ``no'' instances do not.
\end{proof}

One could analogously define $\uQIP[k]$ for any polynomially-bounded $k$, and hope to show the same containment in $\mathsf{NEXP}$.
Unfortunately, it appears that \Cref{lem:prover_canonical} does not directly generalize beyond three rounds in showing that workspace qubits are superfluous.
For example, one could imagine a scenario in which the prover sends the verifier one half of an EPR pair in round $2$, and then in rounds $k-1$ and $k$ plays a sort of CHSH game with the prover.
This would require the prover to retain their half of the EPR pair until round $k$.

\subsection{Lower bounding \texorpdfstring{$\uQIP[3]$}{Unentangled QIP[3]}}

In this section, we prove a complementary lower bound on $\uQIP[3]$, implying that $\uQIP[3] = \NEXP$.
The proof relies on the following quantum PCP for $\NEXP$, due to Raz~\cite{Raz05-qipcp}.

\begin{lemma}
    \label{lem:raz_nexp_pcp}
    For any language $L \in \NEXP$, there exists a polynomial-time quantum oracle algorithm $Q^{(\cdot)}$ that makes a single \emph{classical}\footnote{Meaning, the verifier measures a register to obtain $y \in \{0,1\}^{\poly(n)}$, and then queries the value of $f(y)$.} query to the oracle such that
    \begin{itemize}
    \item \emph{Completeness.} For every $x\in L$, there exists a state $\ket{\psi}$ on $\poly(|x|)$ qubits and an oracle $f: \{0,1\}^{\poly(|x|)} \to \{0,1\}^{\poly(|x|)}$ for which $Q^f(x, \ket{\psi})$ accepts with probability $1$.
    \item \emph{Soundness.} For every $x\not\in L$, for every state $\ket{\psi}$ on $\poly(|x|)$ qubits and oracle $f: \{0,1\}^{\poly(|x|)} \to \{0,1\}^{\poly(|x|)}$, $Q^f(x, \ket{\psi})$ accepts with probability $o(1)$ (as a function of $|x|$).
    \end{itemize}
\end{lemma}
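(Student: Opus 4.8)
The plan is to derive the lemma from $\NEXP = \MIP$ \cite{bfl}. Start with a two-prover one-round interactive proof for $L$ with perfect completeness and soundness bounded away from $1$, and put it (via the standard oracularization) into the normal form of a ``main'' prover $P_1$ that receives a question $q_1$ drawn from a distribution $D_1$ over $\{0,1\}^{\poly(n)}$ and returns $a_1$, a ``check'' prover $P_2$ that receives $g(q_1)$ for a fixed polynomial-time function $g$ and returns $a_2$, and a polynomial-time predicate $\Pi(q_1, a_1, a_2)$ (folding the original acceptance test together with a consistency check between $a_1$ and $a_2$). Take the \emph{classical} oracle to be $P_2$'s strategy $f \colon z \mapsto a_2$ --- an exponentially long table of polynomial-size blocks, as required --- and the \emph{quantum} witness to be the ``weighted superposition'' of $P_1$'s strategy, $\ket{\psi} = \sum_{q_1}\sqrt{D_1(q_1)}\,\ket{q_1}\ket{a_1(q_1)}$, on $\poly(n)$ qubits. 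The verifier $Q^f(x, \ket\psi)$ measures both registers of the witness in the computational basis to obtain a pair $(q_1, a_1)$, issues its single oracle query to learn $a_2 = f(g(q_1))$, and accepts iff $\Pi(q_1, a_1, a_2) = 1$. Perfect completeness is immediate: against the honest witness and oracle the measurement yields $(q_1, a_1(q_1))$ with $q_1 \sim D_1$, and perfect completeness of the MIP protocol does the rest.

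For soundness, I would extract from any cheating pair $(\ket\psi, f)$ an ordinary MIP strategy for $(P_1, P_2)$ whose value upper-bounds the verifier's acceptance probability: $P_2$ plays $f$, and $P_1$, on question $q_1$, answers with the distribution obtained by measuring the answer register of $\ket\psi$ conditioned on its first register yielding $q_1$. The catch --- which I expect to be the crux of the argument --- is that measuring the first register of a cheating witness induces some distribution $\tilde D_1$ that need not equal $D_1$: the adversary can concentrate all the amplitude on a single question $q_1^\star$ that, for its adversarially chosen table $f$, is especially easy to satisfy, and the bare MIP soundness bound (an average over $q_1 \sim D_1$) says nothing about that conditional value. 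Getting past this requires using the internal structure of the NEXP multiprover protocol: since it is built from a low-degree test together with a sum-check, a table $f$ that is consistent with $P_1$'s answers on even a localized set of questions must agree with a global low-degree encoding, and the sum-check portion of $\Pi$ then detects that $f$ does not encode a satisfying computation when $x \notin L$. Packing enough low-degree-test and sum-check data into a single $\poly(n)$-size oracle block, so that the combined verification still costs one query, is the other delicate point.

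Finally, to reach soundness $o(1)$ (indeed $2^{-\poly(n)}$), I would amplify the underlying MIP protocol by parallel repetition before running the construction; this preserves perfect completeness, the two-prover one-round normal form, and the polynomial bounds on all messages, and the weighted-superposition witness grows by only a polynomial factor. Thus the entire difficulty is concentrated in the soundness analysis against \emph{arbitrary} quantum witnesses and oracles --- controlling the induced distribution $\tilde D_1$ and threading a complete low-degree-test-plus-sum-check verification through one polynomial-size oracle block --- and it is here that Raz's argument in \cite{Raz05-qipcp} does its real work.
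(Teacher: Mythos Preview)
The paper does not supply its own proof of this lemma; it is quoted as a black box from Raz~\cite{Raz05-qipcp} and then immediately used to derive \Cref{thm:nexp-in-uqip}. So there is no in-paper argument for you to be compared against.

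Your outline does track the architecture of Raz's construction: encode one MIP prover's strategy as a weighted superposition, take the other prover's answer table as the classical oracle, and let the verifier measure-then-query. You have also put your finger on the genuine obstacle---that a cheating witness can skew the induced question distribution $\tilde D_1$ arbitrarily far from $D_1$, so the bare two-prover soundness bound (an average over $q_1\sim D_1$) is useless on its own.

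Where I would push back is on your sketched fix. The assertion that ``a table $f$ that is consistent with $P_1$'s answers on even a localized set of questions must agree with a global low-degree encoding'' does not hold as stated: if the witness puts all its mass on a single $q_1^\star$, the verifier touches exactly one block of $f$, and low-degree behavior on one block says nothing about global structure. The low-degree test extracts global information only when the queries are sufficiently spread, which is precisely what the adversary is denying you here. Raz resolves this with a separate mechanism that lets the verifier certify the question distribution carried by the quantum witness (and the one-block packaging of the full low-degree-plus-sum-check verification is itself a nontrivial piece of his construction). Your parallel-repetition step is fine in spirit, though you would also need it to preserve the projection-game normal form you rely on. Since you explicitly defer these hard parts to \cite{Raz05-qipcp}, and the paper under review does exactly the same, your proposal is adequate as a pointer to the literature---but the soundness paragraph would not stand on its own as an argument.
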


The $\uQIP[3]$ containment of $\NEXP$ amounts to a direct simulation of this PCP.

\begin{theorem}\label{thm:nexp-in-uqip}
   $\NEXP \subseteq \uQIP[3, 1, o(1)]$.
\end{theorem}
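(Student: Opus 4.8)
The plan is to simulate Raz's quantum PCP for $\NEXP$ (\Cref{lem:raz_nexp_pcp}) with a three-round unentangled quantum interactive protocol in which only the first message is quantum. Given $L \in \NEXP$ and the oracle algorithm $Q^{(\cdot)}$ from \Cref{lem:raz_nexp_pcp}, the verifier and prover proceed as follows. In round $1$, the prover sends an $n' = \poly(|x|)$-qubit register $\mathcal{M}$, which in the honest case holds the witness state $\ket{\psi}$; since the prover must be unentangled, this message is guaranteed to be unentangled with the prover's workspace. In round $2$, the verifier runs $Q$ on $(x, \mathcal{M})$ up to the point where it makes its single classical query: it measures the designated register to obtain a string $y \in \{0,1\}^{\poly(|x|)}$, retains the residual post-measurement state, and sends $y$ to the prover. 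In round $3$, the prover replies with a string $z \in \{0,1\}^{\poly(|x|)}$, which honestly equals $f(y)$. The verifier then feeds $z$ back in as the oracle answer, finishes running $Q$ on the residual state, and accepts iff $Q$ accepts.

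For completeness, on $x \in L$ the honest prover sends $\ket{\psi}$ in round $1$ and answers each query $y$ with $f(y)$ in round $3$; this is an entanglement-breaking (indeed, fixed-output) round-$3$ channel, so it is a legal unentangled prover, and by \Cref{lem:raz_nexp_pcp} the verifier accepts with probability $1$. For soundness, the key point is that the promise of unentanglement forces the round-$3$ message to be a classical function of the round-$2$ challenge $y$ alone: after the entanglement-breaking channel $\Phi_2^x$, the register $\mathcal{M}$ is a product state with the prover's workspace, and more importantly the prover has no way to correlate its round-$3$ reply with the post-measurement state held by the verifier (the only information the prover receives in round $2$ is $y$). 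Thus any unentangled prover strategy is described by a round-$1$ state $\rho$ on $\mathcal{M}$ together with, for each $y$, a distribution over answers $z$; by an averaging/convexity argument (as in \Cref{lem:prover_canonical}) we may take $\rho$ pure and the answer to each $y$ deterministic, i.e. the round-$3$ strategy is exactly an oracle $f: y \mapsto z$. Running the verifier on $(\rho, f)$ is then precisely an execution of $Q^f(x, \rho)$, so by the soundness of Raz's PCP the verifier accepts with probability $o(1)$ for $x \notin L$.

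The main obstacle — and the step deserving the most care — is arguing that the round-$3$ message genuinely cannot depend on anything except $y$ in a way that could break Raz's soundness. One must verify that the information flow in the protocol is exactly: prover $\to$ verifier ($\ket{\psi}$), then verifier $\to$ prover ($y$), then prover $\to$ verifier ($z$); in particular the verifier never sends the prover any function of the post-measurement state, so the prover's round-$3$ channel $\Phi_2^x$ acts only on its own workspace (which by the round-$1$ unentanglement promise is a fixed state independent of $\mathcal{M}$) plus whatever it received, namely $y$. Hence $\Phi_2^x$ induces a fixed conditional distribution $z \mid y$. Here the unentanglement promise is doing real work: without it, the prover could have kept half of an entangled pair correlated with $\mathcal{M}$ and used it together with $y$ to produce a $z$ that depends on the verifier's hidden measurement outcome, potentially defeating Raz's soundness guarantee (which quantifies only over a fixed classical oracle $f$ and a fixed state). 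Once this point is nailed down, everything else is a routine unwinding of definitions. Combined with \Cref{thm:uQIP-in-NEXP}, this yields $\uQIP[3] = \uQIP[3,1,o(1)] = \NEXP$, and in particular establishes \Cref{cor:uqip_amplification}.
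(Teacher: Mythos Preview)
Your proposal is correct and follows essentially the same approach as the paper: simulate Raz's PCP by having the prover send $\ket{\psi}$, the verifier measure to extract the query $y$, and the prover reply with $f(y)$; then invoke \Cref{lem:prover_canonical} to reduce any unentangled prover to a pure first message plus a deterministic classical function in round~3, matching the PCP's soundness guarantee. Your additional discussion of why unentanglement is doing real work (preventing the prover from retaining an entangled half that could correlate $z$ with the verifier's residual state beyond what $y$ reveals) is accurate and mirrors the paper's CHSH example in the introduction, though the paper's formal proof handles this more tersely by noting that the verifier measures $\mathcal{M}$ classically at both the end of $V_1^x$ and the start of $V_2^x$, so after the canonical form $P_2^x$ is forced to be a classical function.
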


\begin{proof}
    Consider a $\uQIP[3]$ verifier in which $V_1^x$ initially treats the register $\mathcal{M}$ as the input $\ket{\psi}$ to the algorithm $Q^{(.)}$ from \Cref{lem:raz_nexp_pcp}, then performs the intermediate measurement of $Q^{(.)}(x, \ket{\psi})$ to obtain the input $y \in \{0,1\}^{\poly(n)}$ to the classical query, and finally sends $y$ over $\mathcal{M}$ to the prover.
    In the final round, $V_2^x$ views $\mathcal{M}$ as the classical response to the oracle query $f(y)$, measures $\mathcal{M}$ in the computational basis, and then performs the final measurement of $Q$ to decide acceptance or rejection.

    This protocol has completeness $1$, as witnessed by the unentangled prover $P$ for which $P_1^x$ prepares the state $\ket{\psi}$ and $P_2^x$ evaluates the function $f$ that causes $Q^f(x,\ket{\psi})$ to accept with probability $1$.
    For soundness, after putting the prover in the canonical form of \Cref{lem:prover_canonical}, notice that we can further simplify the description of the prover by assuming without loss of generality that $P_2^x$ evaluates some classical function, because the verifier measures $\mathcal{M}$ in the computational basis both at the end of $P_1^x$ and at the start of $P_2^x$.
    Thus, such a prover strategy is fully specified by the state $\ket{\psi}$ sent at the start and the classical function $f: \{0,1\}^{\poly(n)} \to \{0,1\}^{\poly(n)}$ applied at the end.
    It follows that the $\uQIP[3]$ soundness matches that of the underlying quantum PCP for $\NEXP$, which \Cref{lem:prover_canonical} shows to be $o(1)$.
\end{proof}

Combining \Cref{thm:uQIP-in-NEXP,thm:nexp-in-uqip} gives:

\begin{corollary}
    \label{cor:uqip=nexp}
    $\uQIP[3] = \NEXP$
\end{corollary}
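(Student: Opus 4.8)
The plan is to derive the corollary immediately from the two theorems just established, \Cref{thm:uQIP-in-NEXP} and \Cref{thm:nexp-in-uqip}, so there is essentially no new work beyond checking that the completeness/soundness parameters line up.

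First I would invoke \Cref{thm:uQIP-in-NEXP} with the fixed constants $c = 2/3$ and $s = 1/3$. Their gap $c - s = 1/3$ is trivially at least $\frac{1}{2^{\poly(n)}}$, so the theorem applies and gives $\uQIP[3] = \uQIP[3, 2/3, 1/3] \subseteq \NEXP$.

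For the reverse inclusion, I would invoke \Cref{thm:nexp-in-uqip}, which yields $\NEXP \subseteq \uQIP[3, 1, o(1)]$. It then remains only to observe the routine containment $\uQIP[3, 1, o(1)] \subseteq \uQIP[3, 2/3, 1/3] = \uQIP[3]$: a $\uQIP[3]$ verification procedure with perfect completeness and soundness $o(1)$ is, for all sufficiently large input lengths, also a valid $(2/3, 1/3)$-procedure, since eventually $o(1) < 1/3 < 2/3 \le 1$; the finitely many short input lengths on which this fails can be absorbed into the verifier by hardcoding the correct answer. Chaining the two inclusions gives $\NEXP \subseteq \uQIP[3] \subseteq \NEXP$, hence equality.

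There is no real obstacle at this stage — all the content lives in \Cref{lem:prover_canonical} and \Cref{thm:uQIP-in-NEXP} for the upper bound, and in Raz's PCP (\Cref{lem:raz_nexp_pcp}) feeding \Cref{thm:nexp-in-uqip} for the lower bound. The only point meriting a sentence is the parameter-robustness step above, which is standard. As a byproduct, the same combination of inclusions shows that the completeness and soundness of $\uQIP[3]$ may be taken to be $1$ and $o(1)$, matching the amplification claim recorded in \Cref{cor:uqip_amplification}.
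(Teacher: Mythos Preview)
Your proposal is correct and matches the paper's approach exactly: the paper simply states the corollary as an immediate combination of \Cref{thm:uQIP-in-NEXP} and \Cref{thm:nexp-in-uqip}, and your parameter-checking discussion just fills in the routine details.
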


Together, \Cref{thm:uQIP-in-NEXP,thm:nexp-in-uqip} also show that the completeness/soundness gap of any $\uQIP[3]$ protocol can be amplified from inverse-exponential to arbitrarily big:

\begin{corollary}
    \label{cor:uqip_amplification}
    For any $c(n) - s(n) \ge \frac{1}{2^{\poly(n)}}$, $\uQIP[3, c, s] \subseteq \uQIP[3, 1, o(1)]$.
\end{corollary}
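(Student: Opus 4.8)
The plan is to obtain amplification "for free" by routing through the exact characterization $\uQIP[3] = \NEXP$, rather than attempting a direct parallel-repetition argument (which is delicate in the unentangled setting). Concretely, suppose $A = (\Ayes, \Ano) \in \uQIP[3, c, s]$ with $c(n) - s(n) \ge \frac{1}{2^{\poly(n)}}$. The first step is to invoke \Cref{thm:uQIP-in-NEXP}, whose hypothesis on the completeness–soundness gap is met by assumption, to conclude $A \in \NEXP$. The second step is to apply \Cref{thm:nexp-in-uqip} to place $A$ back inside $\uQIP[3, 1, o(1)]$, now with perfect completeness and vanishing soundness. Chaining these two inclusions yields the claim; taking $c = 2/3$, $s = 1/3$ recovers the clean statement $\uQIP[3] = \uQIP[3,1,o(1)]$, reproving \Cref{cor:uqip=nexp} with optimal parameters as a by-product.

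The only point needing a line of justification is that \Cref{lem:raz_nexp_pcp} and \Cref{thm:nexp-in-uqip} are stated for languages, whereas $A$ is a promise problem. This is harmless: a promise problem in $\NEXP$ is decided by some nondeterministic exponential-time machine that accepts on $\Ayes$ and rejects on $\Ano$, so we may fix any language $L$ with $\Ayes \subseteq L$ and $\Ano \cap L = \emptyset$, run the $\uQIP[3,1,o(1)]$ protocol for $L$ from \Cref{thm:nexp-in-uqip}, and observe that its completeness and soundness guarantees hold in particular for inputs satisfying the promise of $A$. Since the $\uQIP[3]$ verifier is unconstrained off the promise, nothing is lost.

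I do not anticipate a genuine obstacle here: once \Cref{thm:uQIP-in-NEXP} and \Cref{thm:nexp-in-uqip} are in hand, the corollary is essentially a bookkeeping remark. If anything, the conceptual content worth emphasizing in the write-up is \emph{why} one bothers with this indirect route — namely that generic amplification by repeating an unentangled protocol in parallel is not obviously sound (entanglement-breaking provers across copies can still behave adversarially within each copy), so the two-sided collapse to $\NEXP$ is what lets us upgrade the parameters without any new protocol-level analysis.
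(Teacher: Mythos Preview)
Your proposal is correct and matches the paper's approach exactly: the paper derives \Cref{cor:uqip_amplification} simply by chaining \Cref{thm:uQIP-in-NEXP} and \Cref{thm:nexp-in-uqip}, with no additional protocol-level argument. Your added remarks on the promise-problem-versus-language issue and on why direct parallel repetition is avoided are reasonable elaborations the paper leaves implicit; the only quibble is that the aside about ``reproving \Cref{cor:uqip=nexp}'' is slightly off, since \Cref{cor:uqip=nexp} is the equality with $\NEXP$ rather than a parameter statement.
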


\section{Post-Measurement Branching in Unentangled Proof Systems}\label{sec:qmacm}

The preceding section established that $\NEXP \subseteq \uQIP[3]$, in sharp contrast to $\QIP[3] = \PSPACE$. 
This result highlights that the restriction to unentangled proofs can yield proof systems of significantly greater computational power.
On the other hand, it is surprising that the containment of $\NEXP$ in $\uQIP[3]$ made only partial use of the proof system's quantum capabilities, as the second and third messages were purely classical.
To better understand the power of unentangled proof systems, in this section we consider placing further restrictions related to the adaptivity of the verifier.

Concretely, we define a subclass $\uQMACM$ of $\uQIP[3]$ in which the round-2 message consists of public coin tosses and the round-3 message is classical.
A key distinction between $\uQIP[3]$ and $\uQMACM$ is that after the first round the verifier can measure part of the prover’s first message $\ket{\psi}$, obtaining a classical outcome $\ell$ together with the corresponding post-measurement state $\ket{\psi_\ell}$ on the remaining qubits. 
The verifier’s subsequent actions can then depend on $\ell$ and make use of $\ket{\psi_\ell}$. 
We refer to this capability---measuring part a state to extract classical information while retaining the residual post-measurement state---as \emph{post-measurement branching}.
Broadly, our goal in this section is to better understand the power of post-measurement branching in unentangled proof systems.

We formally define $\uQMACM$ below:

\begin{definition}[$\uQMACM$]
A promise problem $A = (A_{\mathrm{yes}}, A_{\mathrm{no}})$ is in $\uQMACM[c, s]$ for polynomial-time computable functions $c, s: \N \to [0,1]$ if there exists a $\QIP[3]$ verification procedure $V$ with the following properties:
\begin{itemize}
  \item \emph{Public coin tosses.} $V_1^x$ samples a random $x \sim \{0,1\}^{\poly(n)}$, records the result in $\mathcal{V}$, and copies it into $\mathcal{M}$.
  \item \emph{Classical final message.} $P_2^x$ sends a classical message.
  Equivalently, $V_2^x$ measures $\mathcal{M}$ in the computational basis before any other processing.
  \item \emph{Completeness.} For all $x\in A_{\mathrm{yes}}$, there exists an \textit{unentangled} prover $P$ that makes $V$ accept with probability at least $c(|x|)$.
  \item \emph{Soundness.} For all $x\in A_{\mathrm{no}}$, every \textit{unentangled} prover $P$ makes $V$ accept with probability at most $s(|x|)$.
\end{itemize}
We define $\uQMACM \coloneqq \uQMACM[2/3, 1/3]$.
\end{definition}

It is not immediately clear whether $\uQMACM$ admits amplification of the completeness and soundness parameters, although we will show its containment in a class that does (\Cref{thm:uQMACM_in_QAM}).

\subsection{Upper Bounding \texorpdfstring{$\uQMACM$}{Unentangled QMACM}}

We now turn to the main result of this section: $\uQMACM \subseteq \QAM$.
For completeness, we also include the definition of $\QAM$, following Marriott and Watrous~\cite{marriott2005quantum}.

\begin{definition}[$\QAM$]\label{def:qam}
A promise problem $A = (A_{\mathrm{yes}}, A_{\mathrm{no}})$ is in $\QAM[c,s]$ if there exists a polynomial-time quantum algorithm $Q(x, y, \ket{\psi})$ such that:
\begin{itemize}
  \item \emph{Completeness.} For all $x\in A_{\mathrm{yes}}$, there exists a collection of $\poly(n)$-qubit states $\{\ket{\psi_y} : y \in \{0,1\}^{\poly(|x|)}\}$ such that $\Pr_y[Q(x, y, \ket{\psi_y}) = 1] \ge c(|x|)$.
  \item \emph{Soundness.} For all $x\in A_{\mathrm{no}}$, for every collection of $\poly(n)$-qubit states $\{\ket{\psi_y} : y \in \{0,1\}^{\poly(|x|)}\}$, $\Pr_y[Q(x, y, \ket{\psi_y}) = 1] \le s(|x|)$.
\end{itemize}
We define $\QAM \coloneqq \QAM[2/3, 1/3]$.
\end{definition}

The definition differs stylistically from our definition of $\QIP[3]$ (\Cref{def:QIP[3]}), which involved the prover and verifier applying alternating quantum channels, only because it is easier to abstract the prover's strategy into a single mapping from strings $y$ to states $\ket{\psi_y}$.
In particular, the string $y$ represents Arthur's random coin tosses sent to Merlin, and $\ket{\psi_y}$ is Merlin's response.
We also remark that $\QAM$ admits amplification by parallel repetition: $\QAM[c, s] = \QAM[1-2^{-\poly}, 2^{-\poly}]$ as long as $c - s$ is inverse-polynomially bounded~\cite[Theorem 4.2]{marriott2005quantum}.

Our proof relies on the existence of an algorithm that compresses a quantum state into a polynomial-sized classical description from which the expectation values of many observables can later be estimated. 
This algorithm derives from a theorem of Aaronson
about simulating bounded-error one-way quantum communication with classical communication, which was in turn used to show that $\BQPqpoly \subseteq \PPpoly$~\cite{aaronson2005limitations}.
In fact, the lemma below is essentially \textit{equivalent} to the non-existence of a superpolynomial quantum advantage in one-way communication complexity for a decision problem.

\begin{lemma}\label{lem:stats-est}
Let $M_1,\ldots,M_{K}$ be $m$-qubit measurement operators (i.e., PSD matrices with $0 \preceq M_i \preceq 1$ for each $i$), and fix an error parameter $\varepsilon$. 
There exist functions $\Stat(\rho, \eps)$ and $\Est(s,i)$ with the following properties:
\begin{enumerate}
  \item $\Stat(\rho, \eps)$ takes as input a classical description of an $m$-qubit state $\rho$ and an error parameter $\eps$, and outputs a classical string of length $O(\log K\frac{m}{\eps^2}\log\frac{m}{\eps})$.
  \item For any $m$-qubit $\rho$ and any $i \in [K]$, $\Est(\Stat(\rho, \eps), i)$ outputs a number that is $\eps$-close to $\tr(\rho M_i)$.
\end{enumerate}
\end{lemma}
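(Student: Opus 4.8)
The plan is to establish Lemma~\ref{lem:stats-est} by a fingerprinting/shadow-tomography-style argument: approximate $\rho$ by a low-rank state whose support is spanned by $\poly(m/\eps)$ fresh copies of $\rho$, encode that low-rank state classically, and use it to estimate $\tr(\rho M_i)$. Concretely, $\Stat(\rho,\eps)$ should output the classical description of a random $R = O\!\big(\frac{m}{\eps^2}\log\frac{m}{\eps}\big)$-fold tensor power $\rho^{\otimes R}$, projected onto a suitable low-dimensional subspace, together with enough bits to specify the measurement outcomes deterministically — but since $\Stat$ is a function of $\rho$ alone (no randomness), I would instead derandomize: Aaronson's original theorem shows that for any fixed $\rho$ there \emph{exists} a sequence of $R$ measurement outcomes (from the $M_i$'s applied to copies of $\rho$) that ``pins down'' $\rho$ well enough that \emph{every} $\tr(\rho M_i)$ is determined to within $\eps$ by minimizing over states consistent with those outcomes.

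**More precisely**, I would follow the structure of Aaronson's ``$D$-optimal'' / iterative argument from \cite{aaronson2005limitations}: $\Stat(\rho,\eps)$ runs the following mental experiment. Maintain a candidate set $S_0$ = all $m$-qubit states. At each step $t$, if there is an index $i$ and a state $\sigma \in S_t$ with $|\tr(\sigma M_i) - \tr(\rho M_i)| > \eps$, record the pair $(i, \text{a }\poly(m/\eps)\text{-bit approximation of }\tr(\rho M_i))$ and set $S_{t+1} = \{\sigma \in S_t : \tr(\sigma M_i) \approx \tr(\rho M_i)\}$. A potential/volume argument (the ``$2^m$ effective dimension'' bound, via e.g.\ a net on density matrices or the determinant potential) shows this terminates after $R = O(m/\eps^2 \cdot \log(m/\eps))$ steps. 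The output string is the list of these $R$ recorded pairs, each taking $O(\log K + \log(m/\eps))$ bits, for a total of $O\!\big(\log K \cdot \frac{m}{\eps^2}\log\frac{m}{\eps}\big)$ bits as claimed. Then $\Est(s,i)$ reads $s$, forms the (nonempty, since $\rho$ itself lies in it) set $S$ of states consistent with all recorded constraints, and outputs $\tr(\sigma M_i)$ for any $\sigma \in S$; by construction of the termination condition, this is within $\eps$ of $\tr(\rho M_i)$.

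**The main obstacle** is the potential argument bounding $R$ independently of $K$ and polynomially in $m,1/\eps$ — i.e., showing that each recorded constraint shrinks the feasible set of density matrices substantially, so that only $\poly(m/\eps)$ constraints are ever needed regardless of how many observables there are. This is exactly the content of Aaronson's one-way communication lower bound, and the cleanest route is simply to cite it: the function $\Stat$ is the (derandomized) ``message'' Alice sends, $\Est$ is Bob's decoding, and the length bound $O(\log K \cdot \frac{m}{\eps^2}\log\frac{m}{\eps})$ matches the $O(\frac{m}{\eps^2}\log\frac{m}{\eps})$-round adaptive protocol from \cite{aaronson2005limitations} after accounting for the $O(\log K + \log(m/\eps))$ bits per round. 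I would also need to handle the finite-precision bookkeeping (that recording $\tr(\rho M_i)$ to $\poly(m/\eps)$ bits rather than exactly does not blow up the round count, which follows by absorbing the precision loss into a slightly smaller $\eps$) — routine, but worth a sentence. Completeness-style sanity check: $\rho$ is always feasible, so $\Est$ never fails to return a value, and the constraints are by design satisfied to within $\eps$ by the true $\tr(\rho M_i)$.
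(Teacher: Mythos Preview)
Your proposal is correct and reaches the same bound, but the packaging differs from the paper's. The paper reduces the statement to a one-way communication problem: Alice holds $\rho$, Bob holds a pair $(i,t)$ with $t$ a threshold specified to $O(\log(1/\eps))$ bits, and they must decide whether $\tr(M_i\rho)$ exceeds $t+\eps/10$ or falls below $t-\eps/10$. Alice's quantum protocol is simply to send $O(1/\eps^2)$ copies of $\rho$ (so $O(m/\eps^2)$ qubits), and then Aaronson's simulation theorem~\cite[Theorem~3.4]{aaronson2005limitations} is applied as a black box to convert this into a deterministic classical message of the claimed length; $\Stat$ is Alice's message map and $\Est$ runs Bob's side for $O(1/\eps)$ values of $t$ to locate $\tr(\rho M_i)$.

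Your route instead unpacks the mechanism inside that simulation theorem, describing the adaptive refinement that records $R=O(\tfrac{m}{\eps^2}\log\tfrac{m}{\eps})$ pairs $(i,\tr(\rho M_i))$. This is legitimate, but be aware that of your two suggested justifications for the termination bound, a trace-norm net on $m$-qubit density matrices has size doubly exponential in $m$ and would not give a $\poly(m/\eps)$ round count; a matrix-multiplicative-weights/potential argument does work, but that is essentially the proof of Aaronson's theorem, which you then cite anyway. So the paper's version is shorter because it treats~\cite{aaronson2005limitations} as a black box, while yours makes the iterative structure explicit without actually avoiding the citation. Either is acceptable for the lemma's purpose (in the subsequent subsampling lemma, only the output-length bound of $\Stat$ is used, to union-bound over its range).
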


\begin{proof}
    Consider a one-way communication problem between two parties, Alice and Bob, in which Alice receives a description of an $m$-qubit state $\rho$, and Bob receives an index $i \in [K]$ and a parameter $t \in [0, 1]$ that is described to $O(\log \frac{1}{\eps})$ bits of precision.
    Their goal is to decide whether $\tr(M_i \rho) \ge t + \eps/10$ or $\tr(M_i \rho) \le t - \eps/10$, promised that one of these is the case.
    
    Observe that this problem admits a bounded-error one-way communication protocol with complexity $O(\frac{m}{\eps^2})$: Alice sends Bob $O(\frac{1}{\eps^2})$ copies of $\rho$, Bob measures $M_i$ on each of the copies, and accepts if and only if the sample mean is greater than $t$.
    By the simulation theorem for quantum one-way communication with classical communication~\cite[Theorem 3.4]{aaronson2005limitations}, this same problem admits a deterministic \textit{classical} communication protocol in which Alice sends $O(\log K\frac{m}{\eps^2}\log\frac{m}{\eps})$ bits to Bob.

    The two functions $\Stat$ and $\Est$ derive directly from this classical communication protocol.
    The encoding function $\Stat$ is simply the function that Alice uses to map $\rho$ to a classical message.
    The decoding function $\Est$ runs Bob's half of the computation for $O(\frac{1}{\eps})$ different values of $t$ to find one that is within $\eps$ of $\tr(\rho M_i)$.
    The correctness of these two functions follows from the correctness of the classical communication protocol.
\end{proof}

We use the compression lemma above to argue that Arthur can ``subsample'' from the second-round messages while approximately maintaining the $\uQMACM$ acceptance probability.
In this next lemma, the $\max_\rho$, $\E_y$, and $\max_z$ operators correspond respectively to the first-, second-, and third-round messages of the $\uQMACM$ protocol.
In plain words, the lemma says that if Arthur samples his message $y$ from a random polynomial-size subset of $\{0,1\}^m$ instead of uniformly over $\{0,1\}^m$, then with high probability over the chosen subset, the completeness probability of the protocol is approximately unchanged.

\begin{lemma}
    \label{lem:subsampling}
    Let $M_{y,z}$ be an $m$-qubit quantum measurement for each $y, z \in \{0,1\}^m$.
    Then except with probability at most $\exp(O(\frac{m^2}{\eps^2}\log\frac{m}{\eps}) - 2\eps^2 r/9))$ over $y_1,\ldots,y_r \sim \{0,1\}^m$,
    \begin{equation}
    \label{eq:subsampling_works}
    \ABS{\max_{\ket{\psi} \in \States(m)} \E_{y \sim \{0,1\}^m} \max_{z \in \{0,1\}^m} \braket{\psi}{M_{y,z}|\psi} - \max_{\ket{\psi} \in \States(m)} \E_{i \sim [r]} \max_{z \in \{0,1\}^m} \braket{\psi}{M_{y_i,z}|\psi}} \le \eps.
    \end{equation}
\end{lemma}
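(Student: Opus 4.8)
The plan is to reduce to a routine Hoeffding bound for each individual state and then take a union bound --- but the obstacle, and the reason this needs \cref{lem:stats-est}, is that the set $\States(m)$ of $m$-qubit states is a manifold whose $\eps$-nets have size doubly exponential in $m$, so a direct union bound over a net is hopeless. Instead, I would union bound over the set of possible \emph{compressed descriptions} $\Stat(\ket\psi\!\bra\psi,\delta)$ of the state, which has only singly-exponential-in-$\poly(m,1/\eps)$ size.

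In detail, set $\delta := \eps/3$ and relabel the $4^m$ operators $\{M_{y,z}\}_{y,z\in\{0,1\}^m}$ as $M_1,\dots,M_K$ with $K=4^m$. Apply \Cref{lem:stats-est} to this family with error parameter $\delta$. Since $\log K = O(m)$, the string $\Stat(\rho,\delta)$ has length $O\!\left(\frac{m^2}{\eps^2}\log\frac m\eps\right)$, so the set $\mathcal S$ of strings that $\Stat(\cdot,\delta)$ can output on $m$-qubit states has size at most $T := \exp\!\left(O\!\left(\frac{m^2}{\eps^2}\log\frac m\eps\right)\right)$. For each $s\in\mathcal S$, define the fixed function $\hat f_s : \{0,1\}^m \to [0,1]$ by letting $\hat f_s(y)$ be $\max_{z\in\{0,1\}^m}\Est(s,(y,z))$ clipped into $[0,1]$, where $(y,z)$ denotes the index of $M_{y,z}$. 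The point is that for \emph{any} state $\ket\psi$, writing $s = \Stat(\ket\psi\!\bra\psi,\delta)$, the guarantee of $\Est$ together with the elementary bound $\bigl|\max_z a_z - \max_z b_z\bigr| \le \max_z \bigl|a_z-b_z\bigr|$ (and the fact that clipping an estimate of a value in $[0,1]$ only decreases its error) gives $\bigl|\hat f_s(y) - \max_z\braket{\psi}{M_{y,z}|\psi}\bigr| \le \delta$ for every $y$; averaging $y$ over $\{0,1\}^m$ or over $\{y_1,\dots,y_r\}$ preserves this $\delta$-closeness.

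Next I would apply concentration to each $\hat f_s$ separately. For a fixed $s$, the numbers $\hat f_s(y_1),\dots,\hat f_s(y_r)$ are i.i.d.\ and lie in $[0,1]$, so Hoeffding gives $\Pr_{y_1,\dots,y_r}\bigl[\,\bigl|\E_{y}\hat f_s(y) - \E_{i\sim[r]}\hat f_s(y_i)\bigr| > \delta\bigr] \le 2\exp(-2\delta^2 r) = 2\exp(-2\eps^2 r/9)$. Taking a union bound over $\mathcal S$, the probability that this deviation exceeds $\delta$ for some $s \in \mathcal S$ is at most $2T\exp(-2\eps^2 r/9) = \exp\!\left(O\!\left(\frac{m^2}{\eps^2}\log\frac m\eps\right) - 2\eps^2 r/9\right)$, which is exactly the claimed failure probability. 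Condition on the complementary event. Then for every state $\ket\psi$, with $s = \Stat(\ket\psi\!\bra\psi,\delta)$, chaining the three inequalities through $\hat f_s$ yields $\bigl|\E_y\max_z\braket{\psi}{M_{y,z}|\psi} - \E_{i\sim[r]}\max_z\braket{\psi}{M_{y_i,z}|\psi}\bigr| \le 3\delta = \eps$. Since this holds uniformly over $\ket\psi \in \States(m)$ and both quantities lie in $[0,1]$, the inequality $\bigl|\sup_\psi A(\psi) - \sup_\psi B(\psi)\bigr| \le \sup_\psi \bigl|A(\psi) - B(\psi)\bigr|$ transfers it to the two maxima in the lemma's conclusion, completing the proof.

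I expect the only genuine difficulty to be conceptual rather than technical: recognizing that the union bound must run over compressed descriptions rather than over a net of states, and invoking \cref{lem:stats-est} with all $4^m$ measurements $M_{y,z}$ so that the description length stays $\poly(m,1/\eps)$. The remaining work --- choosing $\delta = \eps/3$ so that three estimation errors sum to $\eps$ while $2\delta^2 = 2\eps^2/9$, and tracking $\log K = O(m)$ through the length bound of \cref{lem:stats-est} --- is just the bookkeeping needed to reproduce the stated failure probability exactly.
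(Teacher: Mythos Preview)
Your proposal is correct and follows essentially the same approach as the paper: define the surrogate $\hat f_s(y)=\max_z\Est(s,(y,z))$ for each compressed description $s=\Stat(\ket\psi\!\bra\psi,\eps/3)$, apply Hoeffding per $s$, union bound over the $\exp\bigl(O(\tfrac{m^2}{\eps^2}\log\tfrac{m}{\eps})\bigr)$ possible strings $s$, and then pass from the uniform-in-$\ket\psi$ bound to the two maxima. If anything, you are slightly more careful than the paper in clipping $\Est$ into $[0,1]$ before Hoeffding and in stating the $\bigl|\max_z a_z-\max_z b_z\bigr|\le\max_z|a_z-b_z|$ step explicitly.
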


\begin{proof}
    Define
    \[
    f(\ket{\psi}) \coloneqq \E_{y \sim \{0,1\}^m} \max_{z \in \{0,1\}^m} \braket{\psi}{M_{y,z}|\psi}
    \]
    and
    \[
    g(\ket{\psi}) \coloneqq \max_{\ket{\psi} \in \States(m)} \E_{i \sim [r]} \max_{z \in \{0,1\}^m} \braket{\psi}{M_{y_i,z}|\psi}.
    \]
    (The latter is a slight abuse of notation, because $g$ additionally depends on $y_1,\ldots,y_r$.)
    
    Let $\Stat$ and $\Est$ be the functions from \Cref{lem:stats-est}.
    For any string $s$, we claim that except with probability at most $\exp\left(-2\eps^2r / 9 \right)$ over the choices of $y_1, \ldots, y_r$,
    \[
    \E_{y \sim \{0,1\}^m} \max_{z \in \{0,1\}^m} \Est(s, (y, z)) \le \E_{i \sim [r]} \max_{z \in \{0,1\}^m} \Est(s, (y_i, z)) + \frac{\eps}{3}.
    \]
    This is a straightforward consequence of Hoeffding's inequality.
    Thus, for any state $\ket{\psi}$ satisfying $\Stat(\ket{\psi}\!\bra{\psi}, \eps/3) = s$, we have
    \begin{align*}
        f(\ket{\psi})
        &\le \E_{y \sim \{0,1\}^m} \max_{z \in \{0,1\}^m} \Est(s, (y, z)) + \frac{\eps}{3}\\
        &\le \E_{i \sim [r]} \max_{z \in \{0,1\}^m} \Est(s, (y_i, z)) + \frac{2\eps}{3}\\
        &\le g(\ket{\psi}) + \eps,
    \end{align*}
    except with probability at most $\exp\left(-2\eps^2r / 9 \right)$, because of the correctness guarantee of $\Stat$ and $\Est$.
    We similarly obtain $g(\ket{\psi}) \le f(\ket{\psi}) + \eps$ with probability at most $\exp\left(-2\eps^2r / 9 \right)$, which implies $\abs{f(\ket{\psi}) - g(\ket{\psi})} \le \eps$ with probability at most $2\exp\left(-2\eps^2r / 9 \right)$.
    Now, apply a union bound over all possible strings $s$ of length $O(\frac{m^2}{\eps^2}\log \frac{m}{\eps})$ that can be output by $\Stat(\ket{\psi}, \eps/3)$ to conclude that, except with probability at most $\exp(O(\frac{m^2}{\eps^2}\log\frac{m}{\eps}) - 2\eps^2 r/9))$, for \textit{every} $\ket{\psi}$, $\abs{f(\ket{\psi}) - g(\ket{\psi})} \le \eps$.
    From this it follows that $\abs{\max_{\ket{\psi} \in \States(m)}f(\ket{\psi}) - \max_{\ket{\psi} \in \States(m)}g(\ket{\psi})} \le \eps$, which is equivalent to the statement of the lemma.
\end{proof}

Note that in the above lemma, the only use of \Cref{lem:stats-est} is in arguing that we can ``union bound'' over all $m$-qubit quantum states $\ket{\psi}$ as if there were only $2^{\poly(m)}$ possible choices of $\rho$, instead of $2^{2^{\poly(m)}}$.
If $\ket{\psi}$ were an $m$-bit string rather than an $m$-qubit quantum state, then one could \textit{directly} union bound over the choices of $\ket{\psi}$ instead of going through $\Stat(\ket{\psi}\!\bra{\psi}, \eps)$.

We now use \Cref{lem:subsampling} to place $\uQMACM$ in $\QAM$ by combining Merlin's first and third messages into one.
The idea is for Arthur to choose a $\poly(m)$-size subsample of second-round random challenges, to which Merlin responds with both his original first-round quantum message and third-round responses to each of Arthur's random challenges.

\begin{theorem}
    \label{thm:uQMACM_in_QAM}
    For any $c(n) - s(n) \ge \frac{1}{{\poly(n)}}$,
    $\uQMACM[c, s] \subseteq \QAM[c, (c+s)/2] \subseteq \QAM$.
\end{theorem}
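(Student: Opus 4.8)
The plan is to collapse the three-round $\uQMACM$ protocol into a two-round $\QAM$ protocol by having Arthur send many independent copies of his round-$2$ public challenge at once, and having Merlin answer all of them together with the quantum state he would have sent in round $1$. As a preliminary, \emph{put the prover in canonical form.} Since $\uQMACM \subseteq \uQIP[3]$, \Cref{lem:prover_canonical} lets me assume the prover sends a pure state $\ket\psi$ on $\mathcal M$ in round $1$ and keeps no workspace; and since Arthur's round-$2$ message is only public coins $y$ while the round-$3$ message is classical (equivalently $V_2^x$ first measures $\mathcal M$ in the computational basis), any prover strategy is described by a single state $\ket\psi$ (committed before $y$ is seen) together with, for each $y$, a response $z$, which by convexity may be taken deterministic. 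Hence, padding so that $y$, $z$, and the qubit count of $\ket\psi$ all have length $m = \poly(n)$, the optimal $\uQMACM$ acceptance probability equals $\max_{\ket\psi \in \States(m)} \E_{y}\max_{z}\braket{\psi}{M_{y,z}|\psi}$ for a family $\{M_{y,z}\}$ of $m$-qubit measurement operators the verifier can compute.

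For the $\QAM$ protocol, set $\eps \coloneqq (c-s)/4$ and pick $r = \poly(n)$ large enough that the failure probability in \Cref{lem:subsampling} is at most $\eps$ (possible since $m, 1/\eps = \poly(n)$). Arthur's public coins are $y_1,\dots,y_r \sim \{0,1\}^m$ drawn independently; Merlin replies with a $\poly(n)$-qubit state intended to be $\ket\psi \otimes \ket{z_1}\otimes\cdots\otimes\ket{z_r}$. Arthur measures the $r$ classical blocks to read off $z_1,\dots,z_r$, \emph{privately} samples $i \sim [r]$, and runs the original $\uQMACM$ check on $(\ket\psi, y_i, z_i)$, accepting iff the measurement $\{M_{y_i,z_i}, I - M_{y_i,z_i}\}$ on the $\Psi$-register returns the first outcome; all of this runs in $\poly(n)$ time. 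For completeness, given $x \in A_{\mathrm{yes}}$ take an honest $\uQMACM$ witness $\ket\psi$ with responses $z(\cdot)$ satisfying $\E_y\braket{\psi}{M_{y,z(y)}|\psi} \ge c$; Merlin (being unbounded) sends $\ket\psi \otimes \ket{z(y_1)}\otimes\cdots\otimes\ket{z(y_r)}$, and Arthur accepts with probability $\E_{y_1,\dots,y_r}\E_{i}\braket{\psi}{M_{y_i,z(y_i)}|\psi} = \E_{y}\braket{\psi}{M_{y,z(y)}|\psi} \ge c$, so the $\QAM$ completeness is $c$.

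For soundness, fix $x \in A_{\mathrm{no}}$, so $\max_{\ket\psi}\E_y\max_z\braket{\psi}{M_{y,z}|\psi} \le s$. Because Arthur measures the classical blocks before anything else, the acceptance probability is an average---over the measured outcomes $(z_1,\dots,z_r)$ and over a pure-state decomposition of the residual $\Psi$-register---of terms $\E_i \braket{\psi}{M_{y_i,z_i}|\psi}$; bounding each $z_i$ by a maximum over $z$ and $\ket\psi$ by the best $m$-qubit state for the given challenge tuple $\vec y = (y_1,\dots,y_r)$ gives
\[
\Pr[\text{accept}] \;\le\; \E_{\vec y}\Big[\, \max_{\ket\psi \in \States(m)}\ \E_{i \sim [r]}\ \max_{z}\ \braket{\psi}{M_{y_i,z}|\psi} \,\Big].
\]
By \Cref{lem:subsampling} applied to $\{M_{y,z}\}$, except with probability at most $\eps$ over $\vec y$ the bracketed quantity is within $\eps$ of $\max_{\ket\psi}\E_y\max_z\braket{\psi}{M_{y,z}|\psi} \le s$, and it is always at most $1$; hence $\Pr[\text{accept}] \le s + 2\eps = (c+s)/2$. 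Finally, $\QAM[c,(c+s)/2] \subseteq \QAM$ by the parallel-repetition amplification for $\QAM$ noted after \Cref{def:qam}, since the gap $(c-s)/2 \ge 1/\poly(n)$.

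The main obstacle is the soundness step: in the $\QAM$ protocol Merlin's quantum message may depend on all $r$ challenges, whereas in $\uQMACM$ the witness $\ket\psi$ is fixed in advance. Handling this requires two ingredients---(i) keeping the index $i$ private, so Merlin cannot tailor $\ket\psi$ to the single challenge that will be checked, and (ii) the union bound over all $m$-qubit states furnished by \Cref{lem:subsampling} via the one-way-communication compression of \Cref{lem:stats-est}, since a naive union bound over the doubly-exponentially many $m$-qubit states would be hopelessly lossy. Everything else (the canonical-form reduction and completeness) is routine.
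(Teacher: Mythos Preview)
Your proposal is correct and follows essentially the same approach as the paper: canonical-form reduction via \Cref{lem:prover_canonical}, the $\QAM$ protocol in which Arthur sends $r$ parallel challenges and later checks a random one, and the soundness bound via \Cref{lem:subsampling} with $\eps = (c-s)/4$. The only differences are expository---you are more explicit about the index $i$ being private and about averaging over the pure-state decomposition of the residual $\Psi$-register---but the argument and parameter choices match the paper's proof.
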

\begin{proof}
    Suppose a promise problem $A = (\Ayes, \Ano)$ admits a $\uQMACM[c,s]$ protocol in which given an input $x$ of length $n$, the message register $\mathcal{M}$ has exactly $m = \poly(n)$ qubits.
    As a consequence of \Cref{lem:prover_canonical} and the classical final message, it is without loss of generality that Merlin initially sends Arthur an $m$-qubit state $\ket{\psi}$, Arthur responds with a challenge $y \sim \{0,1\}^m$, and Merlin lastly sends Arthur $z \in \{0,1\}^m$ that is a deterministic function of $y$.
    Let $M_{y,z}$ be the measurement that Arthur applies to $\ket{\psi}$ given $y$ and $z$, so that his acceptance probability is exactly $\braket{\psi|M_{y,z}}{\psi}$.
    Then, Merlin's best strategy achieves acceptance probability exactly
    \begin{equation}
    \label{eq:uQMACM_p_acc}
    \max_{\ket{\psi} \in \States(m)} \E_{y \sim \{0,1\}^m} \max_{z \in \{0,1\}^m} \braket{\psi|M_{y,z}}{\psi}.
    \end{equation}

    Now consider the following $\QAM$ protocol for $A$.
    For some $r = \poly(n)$ to be chosen later, Arthur starts by sending Merlin $m \cdot r$ random bits, which we interpret as challenge strings $y_1,\ldots,y_r$ each of length $m$.
    Merlin's response consists of an $m$-qubit state $\ket{\psi}$ and strings $z_1,\ldots,z_r \in \{0,1\}^m$.
    Arthur picks $i \in [r]$ at random, then measures $M_{y_i, z_i}$ on $\rho$.
    The maximum acceptance probability of this $\QAM$ protocol is
    \[
    \E_{y_1,\ldots,y_r \sim \{0,1\}^m} \max_{\substack{\ket{\psi} \in \States(m)\\z_1,\ldots,z_r \in \{0,1\}^m}} \E_{i \in [r]} \braket{\psi|M_{y_i,z_i}}{\psi}.
    \]
    This is evidently equal to
    \begin{equation}
    \label{eq:QAM_p_acc}
    \E_{y_1,\ldots,y_r \sim \{0,1\}^m} \max_{\ket{\psi} \in \States(m)} \E_{i \in [r]} \max_{z \in \{0,1\}^m} \braket{\psi|M_{y_i,z}}{\psi}
    \end{equation}
    because for fixed $\rho$, the choice of $z_i$ only affects the $i$th term in the inner expectation, so to maximize the expectation it is optimal to maximize each term separately.

    We claim that the $\QAM$ protocol has completeness at least $c$.
    Given Merlin's optimal strategy for the $\uQMACM$ protocol,
    Merlin's strategy for the $\QAM$ protocol is to choose the same $\ket{\psi}$ that he would in the $\uQMACM$ protocol, and then for each $i \in [r]$ to let $z_i$ be the string $z$ that he would send upon receiving $y = y_i$ from Arthur.
    This strategy for the $\QAM$ protocol causes Arthur to accept with probability $c$ given any fixed $i \in [r]$, so the acceptance probability is certainly still $c$ when averaging over $i \in [r]$.

    It remains to bound the soundness.
    We do so by upper bounding the acceptance probability of the $\QAM$ protocol (\Cref{eq:QAM_p_acc}) in terms of that of the $\uQMACM$ protocol (\Cref{eq:uQMACM_p_acc}), via application of \Cref{lem:subsampling}.
    Let $p \le \exp(O(\frac{m^2}{\eps^2}\log\frac{m}{\eps}) - 2\eps^2 r/9))$ be the probability that \Cref{eq:subsampling_works} fails to hold in \Cref{lem:subsampling}.
    Then the $\QAM$ acceptance probability is bounded by
    \[
        \E_{y_1,\ldots,y_r \sim \{0,1\}^m} \max_{\rho \in \States(m)} \E_{i \in [r]} \max_{z \in \{0,1\}^m} \tr(\rho M_{y_i,z_i})
        \le p + \eps + \max_{\rho \in \States(m)} \E_{y \sim \{0,1\}^m} \max_{z \in \{0,1\}^m} \tr(\rho M_{y,z}).
    \]
    Choose $\eps = (c - s)/4$ and $r = O(\frac{m^2}{\eps^4} \log \frac{m}{\eps}) \le \poly(|x|)$ so that $p \le \eps$.
    Then assuming $x \in \Ano$, the right hand side is at most $(c - s)/4 + (c - s)/4 + s = (c + s)/2$.

    To summarize, we have shown that $A$ admits a $\QAM[c, (c+s)/2]$ protocol.
    The completeness/soundness can be amplified to $[2/3, 1/3]$, or even further to $[1 - 2^{-\poly}, 2^{-\poly}]$ as shown by Marriott and Watrous~\cite[Theorem 4.2]{marriott2005quantum}.
\end{proof}

Observe that nowhere in the proof was it crucial for Arthur's message to consist of uniformly random public coins.
One could instead envision a class ``$\mathsf{qcc}$-$\uQIP^{\mathsf{no\text{-}pmb}}[3]$'' between $\uQMACM$ and $\uQIP[3]$ in which the verifier's initial classical message is produced by an efficient quantum procedure acting only on the register $\mathcal{V}$, but the final prover message remains classical.
Formally, that would mean $V_1^x$ performs a measurement on $\mathcal{V}$ independent of the received message on $\mathcal{M}$, then swaps the measurement with the prover's message in $\mathcal{M}$.
Intuitively, $\mathsf{qcc}$-$\uQIP^{\mathsf{no\text{-}pmb}}[3]$ is like $\QIP[3]$ except that only the first message is quantum and the verifier is not allowed to use post-measurement branching on the first quantum message.
Then the same proof above would show containment of $\mathsf{qcc}$-$\uQIP^{\mathsf{no\text{-}pmb}}[3]$ in $\QIP[2]$.
We did not attempt to define such a complexity class formally because it seems impossible to give it a short but sufficiently descriptive name.

\section{Constant-Round Quantum-Classical Interactive Proofs}\label{sec:qcip-qcam}

\cref{sec:uqip3} and \cref{sec:qmacm} illustrate that post-measurement branching is an interesting feature of quantum proof systems, and in particular one of the key ingredients that enables three-round unentangled protocols to capture $\NEXP$.
In this section, we turn to the simplest setting where post-measurement branching can be isolated: the classes $\QCAM$ and $\QCIP[2]$. 

Both $\QCAM$ and $\QCIP[2]$ are two-round interactions initiated by Arthur.
In $\QCAM$, Arthur’s message consists of uniformly random classical bits, whereas in $\QCIP[2]$ Arthur may prepare a state $\ket{\psi}$, measure part of it to obtain a message $\ell$ and the post-measurement state $\ket{\psi_\ell}$, and then use both in the remainder of the verification procedure.

The main results of this section are summarized in the following theorem.

\begin{theorem}\label{thm:qcam-qcip-difference}
   $\QCAM = \BP \cdot \QCMA \subseteq \BQ \cdot \QCMA \subseteq \QCIP[2] \subseteq \BQP^{\NP^{\PP}}$. 
\end{theorem}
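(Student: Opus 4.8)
The plan is to establish the chain of containments from left to right, with the bulk of the work going into the final, highest containment $\QCIP[2] \subseteq \BQP^{\NP^{\PP}}$.

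\textbf{The easy containments.} First I would dispose of $\QCAM = \BP \cdot \QCMA$, which is essentially definitional: a $\QCAM$ protocol is exactly a $\QCMA$ verification performed on an input augmented with Arthur's random string, so membership reduces to a randomized many-one reduction to the $\QCMA$ language that takes $(x, \text{coins})$ as input. One has to be a little careful with the completeness/soundness gap of the reduced $\QCMA$ problem, but $\QCMA$ amplifies by standard parallel repetition so this is routine; since the paper promises a full proof in standard notation, I would just give the clean version of Marriott's argument. Next, $\BP \cdot \QCMA \subseteq \BQ \cdot \QCMA$ is trivial because a randomized reduction is a special case of a quantum one (run the randomness-generating circuit, measure, and feed the result as the reduced instance). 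Then $\BQ \cdot \QCMA \subseteq \QCIP[2]$: given a language $B \in \QCMA$ and a quantum reduction producing instances $y$, Arthur in the $\QCIP[2]$ protocol runs the reduction circuit, measures to obtain $y$ together with whatever residual state the circuit left, sends $y$ to Merlin, receives Merlin's classical $\QCMA$-witness for ``$y \in B$'', and runs the $\QCMA$ verifier (possibly using the residual state if the reduction is defined to pass a post-measurement state along — but in the vanilla definition of $\BQ\cdot$ it does not, so Arthur just runs the $\QCMA$ verifier on $y$ and the witness). Again one watches the error parameters, but these amplify.

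\textbf{The main containment $\QCIP[2] \subseteq \BQP^{\NP^{\PP}}$.} Here is where the real work is, and it is the step I expect to be the main obstacle. The difficulty, as the introduction flags, is that Arthur's message in $\QCIP[2]$ arises from a partial measurement of a quantum state, and Arthur's final accept/reject decision depends on the \emph{post-measurement residual state}, so simulating Merlin cannot be done by a naive guess-and-check. The plan: the base $\BQP$ machine runs Arthur's round-1 circuit on $\ket{0}$, measures the message register to get the classical message $\ell$ (this is genuinely sampled, so the residual state is whatever it is), and records $\ell$. Now we must compute $\max_{z} \Pr[\text{Arthur accepts} \mid \ell, \text{Merlin sends } z]$ and compare it to the completeness/soundness thresholds. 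For a \emph{fixed} candidate $z$, the quantity $\Pr[\text{Arthur accepts} \mid \ell, z]$ is the acceptance probability of a postselected quantum computation: prepare $\ket{0}$, run Arthur's round-1 circuit, postselect the message register on outcome $\ell$, then run Arthur's round-2 circuit with $z$ hardwired, and measure the output qubit. By Aaronson's $\PP = \PostBQP$ theorem, deciding whether this probability exceeds a given threshold is in $\PP$ — more precisely, comparing two such postselected acceptance probabilities (or comparing one to a dyadic rational) is a $\PP$ predicate. So the $\NP$ layer nondeterministically guesses Merlin's classical message $z \in \{0,1\}^{\poly(n)}$ and queries the $\PP$ oracle to check whether, conditioned on $\ell$, that $z$ drives Arthur's acceptance probability above the relevant threshold; the $\BQP$ machine accepts iff the $\NP^{\PP}$ oracle reports that such a $z$ exists. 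Completeness and soundness of the simulation follow because on yes-instances some $z$ achieves acceptance probability $\geq c$ conditioned on a typical $\ell$, while on no-instances every $z$ gives $\leq s$; one should massage the thresholds so that the $\PP$ comparison is against a fixed dyadic constant strictly between $s$ and $c$, and handle the (exponentially small, after amplification) probability that the sampled $\ell$ is atypical.

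\textbf{Remaining care.} Two technical points I would be careful about. First, the $\PP$ oracle query must encode the sampled outcome $\ell$ and the guessed $z$ as part of its input, so I need the postselected circuit to be uniformly constructible from $(x, \ell, z)$ — fine, since Arthur's circuits are polynomial-time uniform. Second, $\PostBQP$ requires the postselection event to have nonzero probability; the event ``message register $= \ell$'' has probability exactly $\Pr[\text{Arthur sends } \ell]$, which is positive precisely for the $\ell$ our $\BQP$ machine actually sampled, so this is automatically satisfied on the query we make. The overall structure of the argument is close to the folklore $\IP \subseteq \PSPACE$-style ``simulate the prover by exhausting its messages and evaluating acceptance with an oracle,'' but with $\PP = \PostBQP$ doing the heavy lifting of evaluating a quantum verifier's conditional acceptance probability, which is exactly what lets us cope with post-measurement branching.
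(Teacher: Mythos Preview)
Your proposal is correct and follows essentially the same approach as the paper: the easy containments are handled as you describe, and for $\QCIP[2] \subseteq \BQP^{\NP^{\PP}}$ the paper likewise has the $\BQP$ machine sample Arthur's first-round message $y$, uses $\PostBQP = \PP$ to decide whether a fixed $(x,y,z)$ yields conditional acceptance probability above threshold, and wraps this in an $\NP$ guess of $z$. The only cosmetic differences are that the paper amplifies merely to $(0.9,0.1)$ and uses a Markov-type bound (so the ``atypical $y$'' probability is $\le 0.3$, not exponentially small), and it handles the threshold/promise issue by extending the $\PostBQP$ promise problem to a total $\PP$ language rather than shifting to a dyadic threshold.
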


We will record formal definitions shortly.

\subsection{Quantum-Classical Arthur-Merlin Games}
We begin by establishing that $\QCAM = \BP \cdot \QCMA$. This equivalence is not new: as far as we can tell, it first appeared in Marriott’s Master’s thesis~\cite[Theorem 8]{marriott2003nondeterminism}. The upper bound $\QCAM \subseteq \BP \cdot \QCMA$ was later reproved in \cite[Proposition 33]{lockhart2017quantumstateisomorphism}, and the equivalence was stated without proof or citation in \cite{aaronson2023certified}. 
We include the argument here to present a complete proof in more standard notation than that of Marriott’s thesis.

We start by defining $\QCAM$ and the $\BP$ operator. 

\begin{definition}[$\QCAM$]
A promise problem $A = (A_{\mathrm{yes}}, A_{\mathrm{no}})$ is in $\QCAM[c,s]$ if there exists a polynomial-time quantum algorithm $Q(x, y, z)$ such that:
\begin{itemize}
  \item \emph{Completeness.} For all $x\in A_{\mathrm{yes}}$, there exists a collection of $\poly(n)$-length bit strings $\{z_y : y \in \{0,1\}^{\poly(|x|)}\}$ such that $\Pr_y[Q(x, y, z_y) = 1] \ge c(|x|)$.
  \item \emph{Soundness.} For all $x\in A_{\mathrm{no}}$, for every collection of $\poly(n)$-length bit strings $\{z_y : y \in \{0,1\}^{\poly(|x|)}$, $\Pr_y[Q(x, y, z_y) = 1] \le s(|x|)$.
\end{itemize}
We define $\QCAM \coloneqq \QCAM[2/3, 1/3]$.
\end{definition}

Similar to $\QAM$ (\cref{def:qam}), the string $y$ represents Arthur's random coin tosses sent to Merlin, and $z_y$ is Merlin's response.
One can more generally define $\QCAM[k]$ with $k \geq 2$ rounds of interaction. 
However, this class collapses to $\QCAM$ \cite[Theorem 7(iv)]{doi:10.1137/17M1160173}, in analogy with the collapse of the $\AM$ hierarchy \cite{babaicollapse,BM88-am}.
$\QCAM$ also admits amplification by running many iid trials in parallel and taking the majority vote: $\QCAM[c,s] = \QCAM[1-2^{-\poly},2^{-\poly}]$ as long as $c-s$ is inverse-polynomially bounded.
A proof of this amplification trick is straightforward, and is also a special case of~\cite[Lemma 20]{doi:10.1137/17M1160173}.

We now define the $\BP$ operator, which has a few essentially equivalent definitions in the literature (see~\cite[Section 3.1]{buhrman2024classicalversusquantumqueries}).
For our purposes, the following definition is most convenient.

\begin{definition}[$\BP$ operator]
\label{def:bp_operator}
Let $\mathcal{C}$ be any class of promise problems.
Then $\BP \cdot \mathcal{C}$ consists of all promise problems $A = (A_{\mathrm{yes}}, A_{\mathrm{no}})$ for which there exist a promise problem $B = (B_{\mathrm{yes}}, B_{\mathrm{no}}) \in \mathcal{C}$ and a polynomial $p$ such that, for every input $x$ of length $n$,
\begin{itemize}
    \item Completeness: If $x \in \Ayes$, then 
    $\Pr_{y}[(x,y) \in \Byes] \geq 2/3,$
    \item Soundness: If $x \in \Ano$, then 
    $\Pr_{y}[(x,y) \in \Bno] \geq 2/3,$
\end{itemize}
where $y \in \{0,1\}^{p(n)}$ is uniformly distributed.
\end{definition}

\begin{theorem}\label{thm:qcam=bp}
$\QCAM = \BP \cdot \QCMA$.
\end{theorem}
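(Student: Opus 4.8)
The plan is to prove the two inclusions separately. In both directions the core idea is the standard one for the classical equivalence $\AM = \BP \cdot \NP$: a public-coin message $y$ from Arthur can be folded into the input of a $\QCMA$-style problem, with amplification and Markov's inequality used to control the event that a uniformly random challenge $y$ falls outside the relevant promise.

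\textbf{$\QCAM \subseteq \BP \cdot \QCMA$.} First amplify the $\QCAM$ protocol (using the amplification property of $\QCAM$ noted above) so that it has completeness $1 - 2^{-n}$ and soundness $2^{-n}$, with verification algorithm $Q(x,y,z)$, where $y$ is Arthur's coin toss and $z$ is Merlin's reply. Define the intermediate promise problem $B$ by declaring $(x,y) \in B_{\mathrm{yes}}$ if some $z$ makes $Q(x,y,z)$ accept with probability $\geq 2/3$, and $(x,y) \in B_{\mathrm{no}}$ if every $z$ makes $Q(x,y,z)$ accept with probability $\leq 1/3$. Then $B \in \QCMA$: on input $(x,y)$, Merlin sends $z$ and Arthur runs $Q(x,y,z)$. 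To check the $\BP$-operator conditions with $y$ uniform, note that if $x \in \Ayes$ then plugging in Merlin's optimal replies gives $\E_y[\max_z \Pr[Q(x,y,z)=1]] \geq 1 - 2^{-n}$, so by Markov $\Pr_y[\max_z \Pr[Q(x,y,z)=1] < 2/3] \leq 3 \cdot 2^{-n} < 1/3$, i.e.\ $\Pr_y[(x,y) \in B_{\mathrm{yes}}] \geq 2/3$; and if $x \in \Ano$ then $\QCAM$ soundness gives $\E_y[\max_z \Pr[Q(x,y,z)=1]] \leq 2^{-n}$, so again by Markov $\Pr_y[(x,y) \in B_{\mathrm{no}}] = \Pr_y[\max_z \Pr[Q(x,y,z)=1] \leq 1/3] \geq 1 - 3\cdot 2^{-n} \geq 2/3$.

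\textbf{$\BP \cdot \QCMA \subseteq \QCAM$.} Let $B \in \QCMA$ and polynomial $p$ witness $A \in \BP \cdot \QCMA$; by standard parallel-repetition amplification of $\QCMA$ (valid because the witness is classical and can be reused across copies) assume $B$ has completeness $1 - 2^{-n}$ and soundness $2^{-n}$, with verifier $W(x,y,w)$. The $\QCAM$ protocol for $A$ is: Arthur sends uniform $y \in \{0,1\}^{p(n)}$; Merlin replies with a $\QCMA$-witness $w$ for $(x,y) \in B$; Arthur runs $W(x,y,w)$. If $x \in \Ayes$, then $(x,y) \in B_{\mathrm{yes}}$ with probability $\geq 2/3$, and on those $y$ an honest Merlin is accepted with probability $\geq 1 - 2^{-n}$, so Arthur accepts with probability $\geq \tfrac{2}{3}(1 - 2^{-n})$. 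If $x \in \Ano$, then $(x,y) \in B_{\mathrm{no}}$ with probability $\geq 2/3$, and on those $y$ every Merlin reply is rejected except with probability $\leq 2^{-n}$; bounding the acceptance probability by $1$ on the remaining (at most $1/3$-fraction of) challenges, Arthur accepts with probability $\leq \tfrac{1}{3} + \tfrac{2}{3}2^{-n}$. This is a constant completeness--soundness gap, which amplifies to $[2/3,1/3]$ by running polynomially many independent copies in parallel and taking a majority vote.

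\textbf{Main obstacle.} The only real subtlety is that the intermediate problem $B$ is a \emph{promise} problem, so a uniformly random challenge $y$ may land outside its promise, where we have no control over the $\QCMA$ verifier's behavior (or over Merlin's best reply). In the first direction this is handled by amplifying the outer $\QCAM$ gap to inverse-exponential and then applying Markov's inequality to the averaged acceptance probability, forcing the random challenge onto the correct side of $B$ with overwhelming probability. In the second direction it is handled by conservatively bounding Merlin's acceptance probability on the ``don't-care'' challenges by $0$ in the completeness analysis and by $1$ in the soundness analysis; since that fraction is at most $1/3$, amplifying $B$ first still leaves a constant gap. Everything else is a routine chaining of definitions.
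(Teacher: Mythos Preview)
Your proof is correct and follows essentially the same approach as the paper: define the intermediate $\QCMA$ promise problem $B$ from the verifier $Q(x,y,z)$, use Markov's inequality to bound the fraction of challenges $y$ falling outside $B$'s promise, and handle the ``don't-care'' challenges by bounding acceptance by $0$ or $1$ as appropriate. The only cosmetic difference is that the paper amplifies to constant parameters ($0.9$/$0.1$) rather than to $1-2^{-n}$/$2^{-n}$, obtaining e.g.\ completeness $0.6$ and soundness $0.4$ in the $\BP\cdot\QCMA \subseteq \QCAM$ direction, but the structure and key ideas are identical.
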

\begin{proof}
    Let $A$ be a promise problem in $\BP \cdot \QCMA$ and let $B$ be the corresponding promise problem in $\QCMA$ to which $A$ reduces under \Cref{def:bp_operator}.
    Consider a $\QCMA$ verifier $Q(x, y, z)$ for $B$ where $(x, y)$ is the input to $B$ and $z$ is the witness.
    Suppose without loss of generality that completeness and soundness parameters of $Q$ are amplified to $0.9$ and $0.1$, respectively.
    We claim that $Q$ is also a $\QCAM$ verifier for $A$ with completeness $0.6$ and soundness $0.4$.
    In particular:
    \begin{enumerate}
        \item If $x \in \Ayes$, then $\Pr_y[(x, y) \in \Byes] \ge 2/3$, and therefore $\E_y[\max_z \Pr[Q(x, y, z) = 1)]] \ge 2/3 \cdot 0.9 = 0.6$.
        \item If $x \in \Ano$, then $\Pr_y[(x, y) \in \Bno] \ge 2/3$, and therefore $\E_y[\max_z \Pr[Q(x, y, z) = 1)]] \le 1/3 \cdot 1 + 2/3 \cdot 0.1 = 0.4$.
    \end{enumerate}
    Put another way, the $\QCAM$ protocol is for Arthur to send Merlin random coin tosses $y$ for which $(x, y)$ forms an instance of $B$, and then to run the $\QCMA$ verifier on $(x, y)$ and Merlin's response $z$.
    Hence $A \in \QCAM[0.6, 0.4] \subseteq \QCAM$ because one can amplify $(0.6, 0.4)$ to $(2/3, 1/3)$ by parallel repetition.
    This establishes $\BP \cdot \QCMA \subseteq \QCAM$.

    For the converse, let $A$ be a promise problem in $\QCAM$.
    Take a $\QCAM$ verifier $Q(x, y, z)$ with completeness $0.9$ and soundness $0.1$.
    Let $B$ be the $\QCMA$ promise problem parameterized by $Q$, where $(x, y)$ is interpreted as the input to $B$ and $z$ is the witness.
    That is, $(x, y) \in \Byes$ if there is a $z$ that causes $Q(x, y, z)$ to accept with probability at least $2/3$, and $(x, y) \in \Bno$ if every $z$ causes $Q(x, y, z)$ to accept with probability at most $1/3$.
    We claim that $B$ shows $A \in \BP \cdot \QCMA$.
    In particular:
    \begin{itemize}
        \item If $x \in \Ayes$, then $\E_y[\max_z \Pr[Q( x, y, z) = 1]] \ge 0.9$.
        Because $\Pr[Q(x, y, z)] \in [0, 1]$, it must be the case that $\Pr_y[\max_z \Pr[Q(x, y, z) = 1] \ge 2/3] \ge 0.7$, and therefore $\Pr_y[(x, y) \in \Byes] \ge 0.7$.
        \item If $x \in \Ano$, then $\E_y[\max_z \Pr[Q, x, y, z) = 1]] \le 0.1$.
        Because $\Pr[Q(x, y, z)] \in [0, 1]$, it must be the case that $\Pr_y[\max_z \Pr[Q(x, y, z) = 1] \le 1/3] \ge 0.7$, and therefore $\Pr_y[(x, y) \in \Bno] \ge 0.7$.
    \end{itemize}
    This shows that $B$ satisfies \Cref{def:bp_operator} with respect to $A$.
\end{proof}

We note that a similar result, $\QAM = \BP \cdot \QMA$, was also shown in Marriott's thesis \cite[Theorem 12]{marriott2003nondeterminism}.
A proof follows from an identical argument to that of \cref{thm:qcam=bp}.

\subsection{Quantum-Classical Interactive Proofs}
Here we establish $\BQ \cdot \QCMA \subseteq \QCIP[2] \subseteq \BQP^{\NP^{\PP}}$.
We start with some definitions.

A $\QCIP[2]$ verification procedure is specified by a polynomial-time uniformly generated family of quantum circuits $V = \{V_1^x, V_2^x : x \in \{0,1\}^*$.
On input $x$ of length $n$, these circuits determine the verifier's actions across the two-round interaction.
Each circuit acts on registers of size $\poly(n)$, partitioned into a message register $\calM$, exchanged with the prover, and a verifier workspace register $\calV$, retained by the verifier. 
The prover is modeled by an unrestricted family of quantum operations $P = \{P^x : x \in \{0,1\}^*\}$ that act on the same message register $\calM$ and a \emph{prover workspace} register $\calP$ (which need not be polynomially-bounded in size). 

The defining restriction of $\QCIP[2]$ is that \emph{the message register is measured in the computational basis before transmision}. Equivalently, the final operation of both $V_1^x$ and $P^x$ on $\calM$ is a computational basis measurement. 
The interaction proceeds as follows: 
\begin{enumerate}
\item The three registers $\calV, \calM, \calP$ are initialized to the all-zeros state.
\item The verifier applies $V_1^x$ to $\calP$ and $\calM$.
The register $\mathcal{M}$ is then measured in the computational basis and sent as a classical message.
\item The prover applies $P^x$ to $\calP$ and $\calM$. Again, $\mathcal{M}$ is measured in the computational basis before being sent back.
\item Finally, the verifier applies $V_2^x$ to $(\calV$, $\calM$) and measures a designated output qubit to decide acceptance or rejection.
\end{enumerate}

We now formally define the class $\QCIP[2]$.

\begin{definition}[{$\QCIP[2]$}]
   A promise problem $A = (A_{\rm yes}, A_{\rm no})$ is in $\QCIP[2, c, s]$ for polynomial-time computable functions $c,s : \N \to [0,1]$ if there exists a $\QCIP[2]$ verification procedure $V$ with the following properties: 
   \begin{itemize}
       \item Completeness: For all $x \in A_{\rm yes}$, there exists a quantum prover $P$ that causes $V$ to accept $x$ with probability at least $c(\abs{x})$.
       \item Soundness: For all $x \in A_{\rm no}$, every quantum prover $P$ causes $V$ to accept $x$ with probability at most $s(\abs{x})$. 
   \end{itemize}
We define $\QCIP[2] \coloneqq \QCIP[2,2/3,1/3]$.
\end{definition}

We note that error reduction in $\QCIP(2)$ can be achieved by parallel repetition: running multiple independent instances of the protocol in parallel and deciding acceptance by a majority vote.

Next, we define the $\BQ$ operator, which was recently formalized by Buhrman, Le Gall, and Weggemans~\cite{buhrman2024classicalversusquantumqueries}.Intuitively, $\BQ \cdot \calC$ consists of those promise problems that admit polynomial-time \emph{quantum} reductions to problems in $\calC$.

\begin{definition}[$\BQ$ operator]
\label{def:bq_operator}
Let $\mathcal{C}$ be any class of promise problems.
Then $\BQ \cdot \mathcal{C}$ consists of all promise problems $A = (A_{\mathrm{yes}}, A_{\mathrm{no}})$ for which there exist a promise problem $B = (B_{\mathrm{yes}}, B_{\mathrm{no}}) \in \mathcal{C}$ and a polynomial-time quantum algorithm $\calA$ such that, for every input $x$ of length $n$,
\begin{itemize}
    \item Completeness: If $x \in \Ayes$, then 
    $\Pr_{y\sim \calA(x)}[(x,y) \in \Byes] \geq 2/3,$
    \item Soundness: If $x \in \Ano$, then 
    $\Pr_{y\sim \calA(x)}[(x,y) \in \Bno] \geq 2/3.$
\end{itemize}
Here the distribution $y \sim \calA(x)$ is obtained by running $\calA$ on input $x$ and measuring a designated polynomial-size output register in the computational basis. 
\end{definition}

We will now prove lower and upper bounds on $\QCIP[2]$. 

\begin{theorem}\label{thm:qcip2-bounds}
    $\BQ \cdot \QCMA \subseteq \QCIP[2].$
\end{theorem}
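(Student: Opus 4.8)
\textit{Proof proposal.} The plan is to directly simulate the quantum many-one reduction underlying $\BQ \cdot \QCMA$ inside a two-round quantum-classical interaction. Let $A \in \BQ \cdot \QCMA$, and let $B \in \QCMA$ together with the polynomial-time quantum algorithm $\mathcal{A}$ witness this containment as in \Cref{def:bq_operator}. The $\QCIP[2]$ verifier, on input $x$, runs $\mathcal{A}(x)$, measures its designated output register in the computational basis to obtain a string $y$, and sends $y$ to the prover as its (classical) round-$1$ message; the honest prover replies with a $\QCMA$ witness $w$ for the instance $(x,y)$ of $B$, and the verifier accepts iff the $\QCMA$ verifier for $B$ accepts $(x,y)$ with witness $w$. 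Note that this simulation makes no essential use of post-measurement branching: although the verifier's first message arises from measuring part of a quantum state, the only information carried forward is the classical outcome $y$ (copied into the workspace register $\mathcal{V}$ before the measurement), not the residual post-measurement state. This matches the observation from the introduction that $\BQ \cdot \QCMA$ already isolates the ``quantum reduction'' ingredient of $\QCIP[2]$ without its adaptivity.

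Before composing, I would first amplify the $\QCMA$ verifier for $B$ by the standard parallel-repetition-and-majority-vote trick so that its completeness is at least $0.9$ and its soundness at most $0.1$; this step is essential, since with the raw parameters $2/3$ and $1/3$ the composed protocol's completeness would fall below its soundness. Fitting the construction into the formal $\QCIP[2]$ model is routine: $V_1^x$ runs $\mathcal{A}(x)$ with its output placed in (part of) the message register $\mathcal{M}$, copies that string into $\mathcal{V}$ so it can be reused, after which $\mathcal{M}$ is measured and transmitted; $P^x$ writes its response into $\mathcal{M}$; and $V_2^x$ simulates the amplified $B$-verifier on the hard-wired input $x$, the string $y$ recovered from $\mathcal{V}$, and the witness held in $\mathcal{M}$. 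The honest prover is the (unbounded) map sending each received $y$ to a good $\QCMA$ witness for $(x,y)$ whenever one exists.

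For the analysis, suppose $x \in A_{\mathrm{yes}}$. With probability at least $2/3$ over $y \sim \mathcal{A}(x)$ we have $(x,y) \in B_{\mathrm{yes}}$, in which case the honest prover supplies a witness causing $V_2^x$ to accept with probability at least $0.9$; hence the verifier accepts with probability at least $\tfrac{2}{3}\cdot 0.9 = 0.6$. Now suppose $x \in A_{\mathrm{no}}$. With probability at least $2/3$ we have $(x,y) \in B_{\mathrm{no}}$, in which case every prover message is rejected with probability at least $0.9$; on the remaining probability mass (at most $1/3$) the pair $(x,y)$ may violate the promise of $B$ and the prover is effectively unconstrained, contributing at most $1$. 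Thus the verifier accepts with probability at most $\tfrac{1}{3}\cdot 1 + \tfrac{2}{3}\cdot 0.1 = 0.4$. This shows $A \in \QCIP[2, 0.6, 0.4]$, and since the gap is a positive constant, parallel repetition and majority vote push the parameters to $(2/3, 1/3)$, giving $A \in \QCIP[2]$.

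The construction is essentially immediate, so there is no serious obstacle; the only points requiring care are (i) amplifying $B$'s $\QCMA$ verifier \emph{before} composition, without which the resulting protocol has no completeness--soundness gap, and (ii) correctly charging the event---of probability up to $1/3$---on which the quantum reduction produces a pair $(x,y)$ outside the promise of $B$, during which the soundness bound must fall back on the trivial bound of $1$.
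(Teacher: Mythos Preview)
Your proposal is correct and follows essentially the same approach as the paper: amplify the $\QCMA$ verifier for $B$ to completeness $0.9$ and soundness $0.1$, have the $\QCIP[2]$ verifier sample $y\sim\mathcal{A}(x)$ and send it as the first message, then run the amplified $\QCMA$ check on the prover's reply, yielding parameters $(0.6,0.4)$ which are amplified by parallel repetition. Your additional remarks on post-measurement branching being inessential here and on handling the off-promise event in the soundness analysis also align with the paper's discussion.
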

\begin{proof}
Let $A$ be a promise problem in $\BQ \cdot \QCMA$. Let $B$ be the corresponding promise problem in $\QCMA$ to which $A$ reduces via the polynomial-time quantum algorithm $\calA$ under \Cref{def:bp_operator}.
Consider a $\QCMA$ verifier $Q(x, y, z)$ for $B$ where $(x, y)$ is the input to $B$ and $z$ is the witness.
Suppose without loss of generality that completeness and soundness parameters of $Q$ are amplified to $0.9$ and $0.1$, respectively.
We claim that $Q$ is also a $\QCIP[2]$ verifier for $A$ with completeness $0.6$ and soundness $0.4$.
In particular:
    \begin{enumerate}
        \item If $x \in \Ayes$, then $\Pr_{y\sim \calA(x)}[(x, y) \in \Byes] \ge 2/3$, and therefore $\E_{y\sim \calA(x)}[\max_z \Pr[Q(x, y, z) = 1)]] \ge 2/3 \cdot 0.9 = 0.6$.
        \item If $x \in \Ano$, then $\Pr_{y\sim \calA(x)}[(x, y) \in \Bno] \ge 2/3$, and therefore $\E_{y \sim \calA(x)}[\max_z \Pr[Q(x, y, z) = 1)]] \le 1/3 \cdot 1 + 2/3 \cdot 0.1 = 0.4$.
    \end{enumerate}
The protocol is for Arthur to send to Merlin $y = \calA(x)$ for which $(x, y)$ forms an instance of $B$, and then to run the $\QCMA$ verifier on $(x, y)$ and Merlin's response $z$.
Hence $A \in \QCIP[2,0.6, 0.4] \subseteq \QCIP[2]$ because one can amplify $(0.6, 0.4)$ to $(2/3, 1/3)$ by parallel repetition.
This establishes $\BQ \cdot \QCMA \subseteq \QCIP[2]$.
\end{proof}

\begin{theorem}
\label{thm:QCIP[2]_upper}
    $\QCIP[2] \subseteq \BQP^{\NP^{\PP}}$.
\end{theorem}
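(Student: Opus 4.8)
The plan is to have the base $\BQP$ machine generate the verifier's round-$1$ message $\ell$ by actually running $V_1^x$, and then offload the simulation of Merlin and the final check to the $\NP^\PP$ oracle. First I would put the $\QCIP[2]$ protocol into a convenient normal form: amplify completeness/soundness to $1-2^{-n}$ and $2^{-n}$ by parallel repetition, and assume (by deferring measurements and folding ancillas into $\calV$) that $V_1^x$ and $V_2^x$ are unitaries on $\calV\calM$. Since the message register is measured in the computational basis before transmission, Merlin's reply to a round-$1$ message $\ell$ is a classical string $m\in\{0,1\}^{\poly(n)}$, and Merlin never touches $\calP$ again after sending $m$. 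Hence the only relevant quantity is $p(\ell,m)\coloneqq$ the probability that $V_2^x$ accepts $\ket{\psi_\ell}_\calV\otimes\ket{m}_\calM$, where $\ket{\psi_\ell}$ is the residual state on $\calV$ conditioned on the round-$1$ measurement of $\calM$ yielding $\ell$; and the optimal overall acceptance probability equals $\E_\ell[\max_m p(\ell,m)]$.

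The $\BQP^{\NP^\PP}$ algorithm is then: run $V_1^x$ on the all-zeros state, measure $\calM$ to obtain $\ell$, query the $\NP^\PP$ oracle on the predicate $\Phi(\ell)\coloneqq \exists m\,[\,p(\ell,m)>1/2\,]$, and accept iff the oracle answers yes. The $\NP$ layer guesses $m$, and the $\PP$ oracle evaluates the inner predicate. For the $\PP$ step I would invoke Aaronson's $\PP=\PostBQP$ theorem: given $\ell$ and $m$, a postselecting quantum machine runs $V_1^x$, postselects on the measurement of $\calM$ returning $\ell$ (which re-creates $\ket{\psi_\ell}$ on $\calV$ from scratch), re-initializes $\calM$ to $\ket{m}$, applies $V_2^x$, and measures the output qubit; conditioned on postselection this accepts with probability exactly $p(\ell,m)$, so deciding $p(\ell,m)>1/2$ lies in $\PostBQP=\PP$. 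Equivalently, avoiding postselection, $p(\ell,m)>1/2$ is a comparison between the acceptance probabilities of two explicit $\poly(n)$-size quantum circuits (``accept iff the $\calM$-outcome is $\ell$ and $V_2^x$ accepts'' versus ``accept iff the $\calM$-outcome is $\ell$ and $V_2^x$ rejects''), which is a standard $\PP$ predicate.

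Correctness follows from a two-sided Markov argument. If $x\in\Ayes$ then $\E_\ell[\max_m p(\ell,m)]\ge 1-2^{-n}$, so $\max_m p(\ell,m)>1/2$ — and hence $\Phi(\ell)$ holds — except with probability at most $2^{-n+1}$ over $\ell$; thus the $\BQP$ machine accepts with probability at least $1-2^{-n+1}$. If $x\in\Ano$ then $\E_\ell[\max_m p(\ell,m)]\le 2^{-n}$, so by Markov $\max_m p(\ell,m)>1/2$ with probability at most $2^{-n+1}$, and whenever it fails we have $p(\ell,m)\le 1/2$ for every $m$, so $\Phi(\ell)$ is false; thus the $\BQP$ machine accepts with probability at most $2^{-n+1}$.

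The main obstacle, and what makes this harder than the $\QCAM$ upper bound, is that faithfully simulating Merlin requires getting the verifier's post-measurement state $\ket{\psi_\ell}$ right, even though one cannot pass a quantum state through the classical $\NP^\PP$ oracle. The resolution, enabled by $\PP=\PostBQP$, is exactly that the $\PP$ oracle can regenerate $\ket{\psi_\ell}$ by re-running $V_1^x$ and conditioning on the same classical outcome $\ell$ that the base machine sampled. A secondary technical point is the usual one for postselection-based arguments: one must ensure the event of obtaining round-$1$ outcome $\ell$ has nonzero probability (true since the $\BQP$ machine only queries values of $\ell$ it actually sampled, and one can pad with a negligible always-succeeding branch), and one should work with the amplified protocol so that the sharp threshold at $1/2$ causes no trouble in the case analysis.
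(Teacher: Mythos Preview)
Your proposal is correct and follows essentially the same approach as the paper: the $\BQP$ machine samples the round-$1$ message by running $V_1^x$, the $\NP$ layer guesses Merlin's reply, and the $\PP$ oracle (via $\PostBQP$) regenerates the post-measurement state by re-running $V_1^x$ and postselecting on the sampled outcome. The only cosmetic differences are that the paper amplifies to $(0.9,0.1)$ and uses a $2/3$-vs-$1/3$ promise (then invokes that $\PP$ is syntactic to extend to a language), whereas you amplify to $(1-2^{-n},2^{-n})$ and use the sharp $>1/2$ threshold directly; both work.
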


\begin{proof}
Let $V = \{V_1^x, V_2^x\}$ be a verifier with completeness $0.9$ and soundness $0.1$ for some promise problem $A \in \QCIP[2]$.
The $\PP$ language will be specified by a $\PostBQP$ promise problem, because $\PP = \PostBQP$~\cite{aaronson2005quantum} and any $\PP$ promise problem can be extended to a $\PP$ language, as $\PP$ is a syntactic class.
Given a tuple $(x, y, z)$, consider a $\QCIP[2]$ interaction between verifier and prover in which we postselect on $V_1^x$ sending the classical message $y$, and the prover responds with $z$.
Then deciding whether this postselected interaction between prover and verifier causes $V$ to accept with probability at least $2/3$ (yes) or at most $1/3$ (no) is clearly in $\PostBQP = \PP$.
The $\NP^\PP$ language, then, will be: given $(x, y)$, decide whether there exists a string $z$ for which $(x, y, z)$ is a yes instance of the $\PP$ language.
Finally, consider the following $\BQP^{\NP^\PP}$ machine that we claim decides $A$: run $V_1^x$ to obtain $y \in \{0,1\}^{\poly(|x|)}$, query the $\NP^\PP$ language on $(x, y)$, and accept if and only if $(x, y)$ is a yes-instance.
This machine works because:
\begin{itemize}
    \item If $x \in \Ayes$, then $\E_{y \sim V_1^x}[\max_z \Pr[V \text{ accepts } z \mid y]] \ge 0.9$, and therefore $\Pr_{y \sim V_1^x}[\max_z \Pr[V \text{ accepts } z \mid y] \ge 2/3] \ge 0.7$.
    Hence, with probability at least $0.7$ over the $y$ sampled by the $\BQP$ machine, there exists a $z$ for which $(x, y, z)$ is a yes-instance of the $\PP$ language, and thus the $\BQP$ machine accepts.
    \item If $x \in \Ano$, then $\E_{y \sim V_1^x}[\max_z \Pr[V \text{ accepts } z \mid y]] \le 0.1$, and therefore $\Pr_{y \sim V_1^x}[\max_z \Pr[V \text{ accepts } z \mid y] \le 1/3] \ge 0.7$.
    Hence, with probability at least $0.7$ over the $y$ sampled by the $\BQP$ machine, for every $z$ $(x, y, z)$ is a no-instance of the $\PP$ language, and thus the $\BQP$ machine rejects.
\end{itemize}
So, the $\BQP^{\NP^\PP}$ machine decides $A$ with error probability at most $0.4$, which can of course be amplified to arbitrarily small error.
\end{proof}

We conclude with several remarks regarding \cref{thm:qcip2-bounds}.
First, unlike $\BP \cdot \QCMA = \QCAM$, the class $\QCIP[2]$ is not characterized by $\BQ \cdot \QCMA$. The reason is that $\BQ \cdot \QCMA$ does not capture the additional power conferred by post-measurement branching.  
In fact, one can show that the subclass of $\QCIP[2]$ in which post-measurement branching is disallowed is precisely equal to $\BQ \cdot \QCMA$. 

Second, our upper bound generalizes straightforwardly to $k$-round interactions.
Specifically, one can show that $\QCIP[2k]$ is contained in a tower of classes of the form 
\begin{equation*}
\BQP^{\NP^{\PP^{\NP^{\PP^{\dots}}}}},
\end{equation*}
which in particular implies that $\QCIP[k] \subseteq \CH$ for any constant $k$.

Finally, by the same reasoning as in \cref{thm:qcip2-bounds}, one can show that $\BQ \cdot \QMA \subseteq \QIP[2]$.  
However, obtaining an upper bound on $\QIP[2]$ stronger than $\PSPACE$ is unclear: our proof technique in \cref{thm:QCIP[2]_upper} breaks when Merlin's message is quantum, as it is in $\QIP[2]$.

\section*{Acknowledgments}
We thank Scott Aaronson for clarifications regarding \cref{lem:stats-est}.

S.G. is supported in part by an IBM Ph.D.\ Fellowship. 
W.K. is supported by the U.S. Department of Energy, Office of Science, National Quantum Information Science Research Centers, Quantum Systems Accelerator, and by NSF Grant CCF-231173.
This work was done in part while S.G. was visiting the Simons Institute for the Theory of Computing, supported by NSF Grant QLCI-2016245. 

\printbibliography

\end{document}